\tikzstyle{VertexStyle} = [shape            = ellipse,
\tikzstyle{EdgeStyle}   = [->,>=stealth']      
\lstdefinelanguage{algo}{%
  morekeywords={function,push,pop,top,for,all,and,or,not,if,then,else,repeat,until,while,do,report,return,such,that,int,stack,end,delete,let,procedure,call,
  remove, skip, add}
}
\newcommand{\Nat}{\ensuremath{\mathbb{N}}}
\newcommand{\Land}{\bigwedge}
\newcommand{\es}{\emptyset}
\newcommand{\incl}{\subseteq}
\newcommand{\sat}{\vDash}
\newcommand{\fleq}{\preccurlyeq}
\newcommand{\sq}[1]{[#1]}
\newcommand{\di}[1]{\langle #1 \rangle}
\newcommand{\sqq}[1]{\sq{\cdot }}
\newcommand{\ddi}[1]{\di{\cdot }}
\newcommand{\set}[1]{\{#1\}}
\def\sub[#1/#2]{[#1/#2]}
\newcommand{\act}[1]{\stackrel{#1}{\longrightarrow}}
\renewcommand{\a}{\alpha}
\renewcommand{\b}{\beta}
\renewcommand{\d}{\delta}
\newcommand{\s}{\sigma}
\renewcommand{\th}{\theta}
\newcommand{\D}{\Delta}
\renewcommand{\S}{\Sigma}
\newcommand{\Nn}{\mathcal{N}}
\newcommand{\PSPACE}{\text{\sc Pspace}}
\newcommand{\struct}[1]{\langle #1 \rangle}
\def\sqr#1#2{\vbox
 {\hrule height#2
  \mbox{\vrule width#2 height#1 \kern#1 \vrule width#2}%
  \hrule height#2}}
\newcommand{\wt}{\widetilde}
\newcommand{\elapse}[1]{\overrightarrow{#1}}
\newcommand{\sync}{\operatorname{sync}}
\newcommand{\init}{\mathit{init}}
\newcommand{\lelapse}{\operatorname{local-elapse}}
\newcommand{\wx}{\wt x}
\newcommand{\wy}{\wt y}
\newcommand{\wX}{\wt X}
\newcommand{\wz}{\wt z}
\newcommand{\lstep}[1]{\xlongrightarrow{#1}_{st}}
\newcommand{\lact}[1]{\xlongrightarrow{#1}}
\renewcommand{\act}[1]{\xLongrightarrow{#1}}
\newcommand{\step}[1]{\xLongrightarrow{#1}_{st}}
\renewcommand{\sub}{\triangledown}
\newcommand{\lzg}{\mathrm{LZG}}
\newcommand{\dom}{\mathit{dom}}
\newcommand{\Proc}{\mathit{Proc}}
 \newcommand{\lv}{\ensuremath{\mathsf{v}}}
 \newcommand{\lZ}{\ensuremath{\mathsf{Z}}}
 \newcommand{\cv}{\ensuremath{\overline{v}}}
\newcommand{\Rpos}{\mathbb{R}_{\ge 0}}
\renewcommand{\Nat}{\mathbb{N}}
\newcommand{\abs}{\mathfrak{a}}
\newcommand{\lval}{\ensuremath{\mathsf{local}}}
\newcommand{\gval}{\ensuremath{\mathsf{global}}}
\newcommand{\tto}[1]{\xRightarrow{#1}}
\renewcommand{\part}{\operatorname{dom}}
\newcommand{\oset}[2]{%
  {\mathop{#2}\limits^{\vbox to -.5\ex@{\kern-\tw@\ex@
   \hbox{\scriptsize #1}\vss}}}}
\newcommand{\lleq}{\triangleleft}
\newcommand{\cmax}{c_{\operatorname{max}}}
\newcommand{\mineareg}{\simeq_{\operatorname{reg}}}
\newcommand{\syncincl}{\sqsubseteq^{\abs}_{sync}}
\newcommand{\invariantref}[1]{\hyperref[#1]{#1}}
\title{Revisiting local time semantics for networks of timed automata}
\author{R. Govind}{Chennai Mathematical Institute, India \and
  Univ. Bordeaux, CNRS,  Bordeaux INP, LaBRI, UMR 5800, F-33400,
  Talence, France}{govindr@cmi.ac.in}{}{}
\author{Frédéric Herbreteau}{Univ. Bordeaux, CNRS, Bordeaux INP,
  LaBRI, UMR 5800, F-33400, Talence, France}{fh@labri.fr}{}{}
\author{B. Srivathsan}{Chennai Mathematical Institute,
  India}{sri@cmi.ac.in}{}{}
\author{Igor Walukiewicz}{Univ. Bordeaux, CNRS, Bordeaux INP,
  LaBRI, UMR 5800, F-33400, Talence, France}{igw@labri.fr}{}{}
\authorrunning{R. Govind, F. Herbreteau, B. Srivathsan and
  I. Walukiewicz}
\keywords{Timed automata, verification, local-time semantics, abstraction}
\begin{document}

\maketitle

\begin{abstract}
  We investigate a zone based approach for the reachability problem in
  timed automata. 
  The challenge is to alleviate the size explosion of the search space when
  considering networks of timed automata working in parallel.
  In the timed setting this explosion is particularly visible as even
  different interleavings of local actions of processes may lead
  to different zones.
  Salah \emph{et al.}\ in 2006 have shown that the union of all these
  different zones is also a zone. This observation was used in an
  algorithm which from time to time detects and aggregates these zones
  into a single zone.
  
  We show that such aggregated zones can be calculated more efficiently
  using the local time semantics and the related notion of local zones
  proposed by Bengtsson \emph{et al.} in 1998. Next, we point out a flaw
  in the existing method to ensure termination of the local zone graph
  computation. We fix this with a new algorithm that builds the local
  zone graph and uses abstraction techniques over (standard) zones for termination.
  We evaluate
   our algorithm on standard examples. On various 
  examples, we observe an order of magnitude decrease in the 
   search space. On the other examples, the algorithm performs like the
   standard zone algorithm. 
\end{abstract}

\section{Introduction}

Timed automata~\cite{Alur:TCS:1994} are a popular model for real-time
systems.
They extend finite state automata with real valued
variables called \emph{clocks}.
Constraints on clock values can be
used as guards for transitions, and clocks can be reset to zero during
transitions.
Often, it is more natural to use a network of
timed automata which operate concurrently and
synchronize on joint actions.  We study the reachability
problem for networks of timed automata: given a state of the timed
automaton network, is there a run from the initial state to the given
state.

A widely used technique to solve the reachability problem constructs a
\emph{zone graph}~\cite{Daws}
whose nodes are \emph{(state, zone)} pairs consisting of a state
of the automaton and a \emph{zone} representing a set of clock
valuations~\cite{Dill:1990:DBM}.
This graph may not be finite, so in order to guarantee termination
of an exploration algorithm, various sound and complete abstraction
techniques are
used~\cite{Daws,BBLP-STTT05,Herbreteau,Gastin:CAV:2019}.

Dealing with automata operating in parallel poses the usual
state-space explosion problem arising due to different
interleavings. Consider an example of a network with two processes in
Figure~\ref{fig:local-zones-commute}.
\begin{figure}[t]
  \begin{tikzpicture} [state/.style={draw, thick, circle, inner
    sep=2pt},font=\scriptsize]
  \begin{scope}[yshift=-0.5cm]
    \begin{scope}[every node/.style={state}]
      \node (p0) at (0,0) {$p_0$};
      \node (p1) at (0,-1.2) {$p_1$};
      \node (q0) at (1, 0) {$q_0$};
      \node (q1) at (1,-1.2) {$q_1$};
    \end{scope}
    
    \begin{scope}[->, >=stealth, thick]
      \draw ([yshift=0.5cm]p0.north) to (p0);
      \draw (p0) edge node[left]
      {$\begin{array}{c}a\\ \{x\}\end{array}$} (p1);
      \draw ([yshift=0.5cm]q0.north) to (q0);
      \draw (q0) edge node[right]
      {$\begin{array}{c}b\\ \{y\}\end{array}$}(q1);
    \end{scope}

    \begin{scope}[font=\footnotesize]
      \node at (0, -1.7) {$A_1$};
      \node at (1, -1.7) {$A_2$};
    \end{scope}
  \end{scope}
  
  \begin{scope}[xshift=5cm]
    \begin{scope}[zone/.style={draw,rectangle,rounded corners,inner sep=1.5pt}]
      \node [zone, fill=gray!20] (z0) at (0, 0)
      {$\begin{array}{c}t \geq 0\\ \wx=\wy=0\end{array}$};
      \node [zone] (z1) at (-1.5, -1)
      {$\begin{array}{c}t \geq \wx\\ \wx \geq \wy = 0\end{array}$};
      \node [zone] (z2) at (1.5, -1)
      {$\begin{array}{c}t \geq \wy\\ \wy \geq \wx = 0\end{array}$};
      \node [zone] (z3) at (-1.5, -2.1)
      {$\begin{array}{c}t \geq \wy\\ \wy \geq \wx \geq 0\end{array}$};
      \node [zone] (z4) at (1.5, -2.1)
      {$\begin{array}{c}t \geq \wx\\ \wx \geq \wy \geq 0\end{array}$};
    \end{scope}
    
    \begin{scope}[->, >=stealth, thick]
      \draw (z0) edge node[left=2pt] {$a$}  (z1);
      \draw (z0) edge node[right=2pt] {$b$} (z2);
      \draw (z1) edge node[left] {$b$}  (z3);
      \draw (z2) edge node[right] {$a$} (z4);
    \end{scope}
  \end{scope}
  
  \begin{scope}[xshift=10.5cm]
    \begin{scope}[zone/.style={draw,rectangle,rounded corners,inner sep=1.5pt}]
      \node [zone, fill=gray!20] (z0) at (0, 0)
      {$\begin{array}{c} t_1 \geq \wx = 0\\ t_2 \geq \wy = 0 \end{array}$};
      \node [zone] (z1) at (-1.5, -1)
      {$\begin{array}{c} t_1 \geq \wx \geq 0\\ t_2 \geq \wy = 0 \end{array}$};
      \node [zone] (z2) at (1.5, -1)
      {$\begin{array}{c} t_1 \geq \wx = 0\\ t_2 \geq \wy \geq 0 \end{array}$};
      \node [zone] (z3) at (0, -2.1)
      {$\begin{array}{c} t_1 \geq \wx \geq 0\\ t_2 \geq \wy \geq 0\end{array}$};
    \end{scope}
    
    \begin{scope}[->, >=stealth, thick]
      \draw (z0) edge node[left=2pt] {$a$} (z1);
      \draw (z0) edge node[right=2pt] {$b$} (z2);
      \draw (z1) edge node[left=2pt] {$b$} (z3); 
      \draw (z2) edge node[right=2pt] {$a$} (z3);
    \end{scope}
  \end{scope}

  \begin{scope}[font=\footnotesize,yshift=-2.7cm]
    \node at (0.5, 0) {Network $\langle A_1, A_2\rangle$};
    \node at (5, 0) {Global zone graph};
    \node at (10.5, 0) {Local zone graph};
  \end{scope}
\end{tikzpicture}

  \caption{Illustration of commutativity in the local zone graph}
  \label{fig:local-zones-commute}
\end{figure}
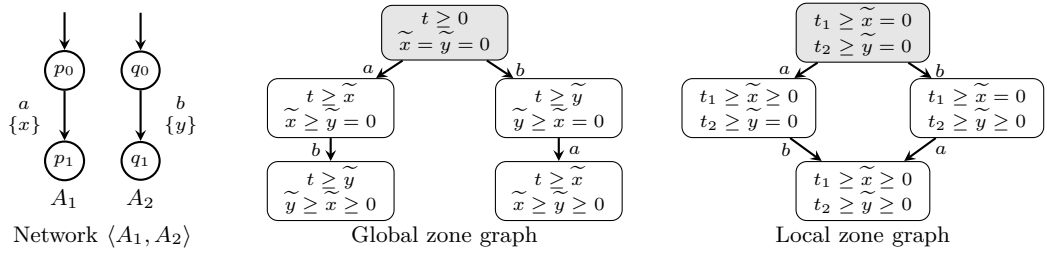
Actions $a$ and $b$ are local to each process. Variables $x, y$ are
clocks and $\{x\}$ denotes that clock $x$ is reset in transition
$a$.
The (global) zone graph maintains the set of configurations reached
after each sequence of actions. Although $a$ and $b$ are local
actions, there is an intrinsic dependence between the two
processes happening due to time. Hence, the zone reached after
executing sequence $ab$ contains configurations where
$x$ is reset before $y$
and the zone after $ba$ contains configurations where
$y$ is reset before $x$.
So the sequences $ab$ and $ba$ lead to different zones.
The number of different interleavings of sequences of local actions increases
exponentially when their length grows, or when more processes get
involved.
Every interleaving can potentially lead to a different zone.

Salah et al. \cite{Salah2006OnII} have shown a surprising property
that the \emph{union} of all zones reached by the interleavings of a
sequence of actions is also a zone. 
We call it an \emph{aggregated zone}.
Their argument is based on the fact that for a given sequence one can
write a zone-like 
constraint defining all the runs on the interleavings of the
sequence.
Then the aggregated zone is obtained simply by projecting this big
constraint on relevant components.
This approach requires to work with sets of constraints whose size
grows with the length of a sequence.
This is both inefficient and limited to finite sequences. 
They use this observation in an algorithm where,
when all the interleavings of a sequence $\s$ have been explored, the
resulting zones are aggregated to a single zone and further
exploration is restricted to this aggregated zone. This
requires
detecting from time to time whether aggregation can
happen.
This is an obstacle in using aggregated zones in efficient
reachability-checking algorithms. Another limitation of this approach
is that it works only for acyclic automata.

Another approach by Bengtsson et al.~\cite{Bengtsson:CONCUR:1998}
involves making time local to each process. This local time approach
is based on a very elegant idea: make time in every process progress
independently, and synchronize local times of processes when they need
to perform a common action.  In consequence, the semantics has the
desired property: two actions whose process domains are disjoint are
commutative. In the example above, depending on the local time in processes $A_1$
and $A_2$, the sequence $ab$ may result, on a global time scale,
in $a$ occurring before $b$, as well as $b$ occurring before $a$. As a
result, the set of valuations reached after $ab$ does not remember the
order in which $a$ and $b$ occurred. Thus, sequences $ab$ and $ba$
lead to the same set of configurations: those obtained after doing $a$
and $b$ concurrently.
Similar to standard zones, we now have
local zones and a local zone graph having \emph{(state, local zone)}
pairs. Due to the above argument, the local zone graph has a nice
property --- if a run $\s$ from an initial zone reaches a local 
zone, then all the runs equivalent to $\s$ lead to the same local
zone. Local zone graphs are therefore ideal for handling interleavings. However,
substantially more involved abstraction techniques are needed for local zones to
make the local zone graph finite. A \emph{subsumption relation}
between local zones was defined in \cite{Bengtsson:CONCUR:1998} but no
algorithm was proposed and there was no effective way to use local
time semantics. Later Minea \cite{Minea} proposed a widening operator
on local zones to construct finite local zone graphs.

\subparagraph*{Summary of results in the paper:}
\begin{itemize}
\item We show that the aggregated zone of interleavings of $\s$ in the
  standard zone graph is obtained by synchronizing valuations in the
  local zone obtained after $\s$
  (Theorem~\ref{thm:local_zones_are_aggregated_zones}). Hence
  computing the local zone graph gives a more direct and efficient
  algorithm than Salah et al. to compute aggregated zones. Moreover,
  this algorithm is not restricted to acyclic timed automata.
  
\item We point out a flaw in the abstraction procedure of Minea to get
  a finite local zone graph
  (Section~\ref{sec:making-offset-zone}).
  
\item We propose a different algorithm to get a finite local zone
  graph. This gives a new reachability algorithm for networks of timed
  automata, which works with local zones but uses subsumption on
  standard zones (Definition~\ref{def:sync-graph} and
  Theorem~\ref{thm:main}). Instead of subsumptions between local
  zones, we use subsumptions between the synchronized valuations
  inside these local
  zones. This helps us to exploit the (well-studied) subsumptions over
  standard zones~\cite{Daws,BBLP-STTT05,Herbreteau,Gastin:CAV:2019}. 
  Moreover, this subsumption is much more aggressive than the standard one
  since, thanks to local-time semantics, the (aggregated) zone used in the
  subsumption represents all the valuations reachable not only by the
  execution that we are exploring but also by all the executions
  equivalent to it.

\item We report on experiments performed with a prototype
  implementation of the algorithm (Table
  \ref{table:experimental-results}).
  The algorithm performs surprisingly well on some examples, and it is
  never worse than the standard zone graph algorithm.
\end{itemize}

\subparagraph*{Related work:}
The basis of this work is local-time semantics and local zones
developed by Bengtsson et.\ al.~\cite{Bengtsson:CONCUR:1998}. The
authors have left open how to use this semantics to effectively
compute the local zone graph since no efficient procedure was provided
to ensure its finiteness.
To that purpose, Minea~\cite{Minea:thesis,Minea} has proposed a
subsumption operation on local zones, and an algorithm using this
operation. Unfortunately, as we exhibit here, the algorithm has a flaw
that is not evident to repair.
Lugiez et. al.~\cite{DBLP:journals/tcs/LugiezNZ05} also use local time
but their method is different.  They use constraints to check if local
clocks of an execution can be synchronized sufficiently to obtain a
standard run.

Aggregated zones are crucial to obtain an efficient verification
procedure for networks of timed automata. Coming to the same state
with different zones inflicts a huge blowup in the zone graph, since
the same paths are explored independently from each of these
zones. This has also been observed in the context of multi-threaded
program verification in~\cite{Sousa:CAV:2017}. Solving this problem in
the context of program analysis requires to over approximate the
aggregated state. Fortunately, in the context of timed automata, the
result of aggregating these zones in still a
zone~\cite{Salah2006OnII}.  Efficient computation of aggregated zones
is thus an important advance in timed automata verification as
demonstrated by our experimental results in
Section~\ref{sec:experiments}.


\section{Networks of timed automata}
\label{sec:preliminaries}

We start by defining networks of timed automata and two semantics for them:
a global-time semantics (the usual one) and a local-time semantics
(introduced in~\cite{Bengtsson:CONCUR:1998}). Then, we recall the fact that
they are equivalent, and state some interesting properties of
local-time semantics w.r.t. concurrency that were observed
in~\cite{Bengtsson:CONCUR:1998}. 

Let $\Nat$ denote the set of natural numbers and $\Rpos$ the set of
non-negative reals. Let $X$ be a finite set of variables called
\emph{clocks}. Let $\phi(X)$ denote a set of clock constraints
generated by the following grammar:
$\phi := x \sim c ~|~ \phi \land \phi$ where $x \in X$, $c \in \Nat$,
and $\sim ~\in \{<, \le, =, \ge, > \}$.

\begin{definition}[Network of timed automata]\label{df:network}
  A network of timed automata with $k$ processes, is a $k$-tuple of
  timed automata $A_1,\dots,A_k$.  Each process
  $A_p=\struct{Q_p,\S_p,X_p,q^\init_p,T_p}$ has a finite set of states
  $Q_p$, a finite alphabet of actions $\S_p$, a finite set of clocks
  $X_p$, an initial state $q^\init_p$, and transitions
  $T_p\incl\S_p\times Q_p\times \phi(X_p)\times 2^{X_p}\times Q_p$.  We
  require that the sets of states, and the sets of clocks are pairwise
  disjoint: $Q_{p_1}\cap Q_{p_2}=\es$, and $X_{p_1}\cap X_{p_2}=\es$
  for $p_1\not=p_2$. We write $\Proc$ for the set of all processes.
\end{definition}

A network is a parallel composition of timed automata.  Its semantics is
that of the timed automaton obtained as the ``synchronized product''
of the processes.  
For an action $b$, a $b$-transition of a process $p$ is an element of $T_p$
with $b$ in the first component.
Synchronization happens on two levels: (i) via time
that advances the same way in all the processes, and (ii) via common
actions, for example if $b\in \S_1\cap \S_2$, then processes $1$ and
$2$ need to synchronize by doing a $b$-transition. 
We define the domain of
an action $b$: $\dom(b)=\set{p : b\in \S_p}$ as the set of processes
that must synchronize to do $b$. We will use some abbreviations:
$Q=\Pi_{p=1}^k Q_p$, $\S=\bigcup_{p=1}^k \S_p$ and
$X=\bigcup_{p=1}^k X_p$.

The semantics of a network is governed by the value of clocks at each
instant. We choose to represent these values using offsets as this
allows a uniform presentation of the global-time semantics below and
the local-time semantics in Section~\ref{sec:local_time_semantics}.
For every clock $x$, we introduce an offset variable $\wx$. The value
of $\wx$ is the time-stamp at which $x$ was last reset. In addition, we
consider a variable $t$ which tracks the global time: essentially $t$
is a clock that is never reset.
We now make this notion precise. Let
$\wX_p = \{ \wx \mid x \in X_p \}$ and $\wX = \bigcup_{p=1}^k \wX_p$.
A \emph{global valuation} $v$ is a function
$v: \wX \cup \{t\} \mapsto \Rpos$ such that $v(t) \ge v(\wx)$ for all
variables $\wx$. In this representation, the value of clock $x$
corresponds to $v(t)-v(\wx)$, denoted $\cv(x)$\footnote{Usually,
  semantics of timed automata is described using valuations of the
  form $\cv$. Here, we have chosen $v$ which uses offsets since it
  extends naturally to the local-time setting. Translations between
  these two kinds of valuations is straightforward, as shown.}.

Recall that offset variable $\wx$ stores the last time-stamp at which
clock $x$ has been reset. Hence, a delay in offset representation
increases the value of reference clock $t$ and leaves offset variables
$\wx$ unchanged. Formally: for $\d \in \Rpos$, we denote by $v + \d$
the valuation defined by: $(v + \d)(t) = v(t) + \d$ and
$(v + \d)(\wx) = v(\wx)$ for all $\wx$. Similarly, resetting the
clocks in $R \subseteq X$, yields a global valuation $[R]v$ defined
by: $([R]v)(t) = v(t)$, and $([R]v)(\wx)$ is $v(t)$ if $x \in R$, and
$v(\wx)$ otherwise. Given a global valuation $v$ and a clock
constraint $g$, we write $v \models g$ if every constraint in $g$
holds after replacing $x$ with its value $\cv(x) = v(t) - v(\wx)$.

A configuration of the network is a pair $(q,v)$ where $q\in Q$ is a
global state, and $v$ is a global valuation.  We will write $q(p)$ to
refer to the $p$-th component of the state $q$.

\begin{definition}[Global-time semantics]\label{def:global-run}
  The semantics of a network $\Nn$ is given by a transition system
  whose states are configurations $(q, v)$. The initial configuration
  is $(q^\init, v^\init)$ where $q^\init(p)=q_p^\init$ is the tuple of
  initial states, and $v^\init(y)=0$ for $y \in \wX \cup \{t\}$.

  There are two kinds of
  transitions, which we call steps: global delay, and action steps.  A
  \emph{global delay} by the amount $\d\in\Rpos$ gives a step
  $(q,v)\step{\d}(q,v+\d)$.  An \emph{action step} on action $b$ gives
  $(q,v)\step{b}(q',v')$ if there is a set of $b$-transitions
  $\set{(q_p,g_p,R_p,q'_p)}_{p\in\dom(b)}$ of the respective processes
   such that:
  \begin{itemize}
    \item  processes from $\dom(b)$ change states: $q_p=q(p)$, $q'_p=q'(p)$, for
    $p\in\dom(b)$, and $q(p)=q'(p)$  for   $p\not\in\dom(b)$; 
    \item all the guards are satisfied: $v\sat g_p$, for $p\in\dom(b)$;
    \item all resets are performed: $v'=[\bigcup_{p\in\dom(b)} R_p]v$;
  \end{itemize}
\end{definition}

A \emph{run of an automaton} from a configuration $(q_0,v_0)$ is a
sequence of steps starting in $(q_0,v_0)$.  For a sequence
$u=b_1\dots b_n$ of actions, we write $(q_0,v_0)\act{u}(q_n,v_n')$ if
there is a run
\begin{equation*}
  (q_0,v_0)\step{\d_0}
  (q_0,v'_0)\step{b_1}
  (q_1,v_1)\step{\d_1}
  (q_1,v'_1)\dots\step{b_n}
  (q_n,v_n)\step{\d_n}
  (q_n,v'_n)
\end{equation*}
for some delays  $\d_0,\dots,\d_n \in \Rpos$.

\begin{definition}[Reachability problem]
  The \emph{reachability problem } is to decide, given a network $\Nn$
  and a state $q$, whether there is a run reaching $q$; or in other
  words, whether there exists a sequence of transitions $u$ such that $(q^\init,v^\init) \act{u} (q,v)$ for some valuation $v$.
\end{definition}

The reachability problem for networks of timed automata is
\PSPACE-complete~\cite{Alur:TCS:1994}.

\subsection{Local-time semantics}
\label{sec:local_time_semantics}

The definition of a network of automata suggests an independence
relation between actions: a pair of actions with disjoint domains
(i.e. involving distinct processes) should commute. We say that
two sequences of actions are \emph{equivalent}, written $u\sim w$ if
one can be obtained from the other by repeatedly permuting adjacent
actions with disjoint domains.

\begin{lemma}\label{lem:small-commutation}
  For two equivalent sequences $u\sim w$: if there are two runs
  $(q,v)\act{u}(q_{u},v_{u})$, and $(q,v)\act{w}(q_{w},v_{w})$ then
  $q_{u}=q_{w}$.
\end{lemma}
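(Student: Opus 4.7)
The plan is to proceed by induction on the minimum number $n$ of adjacent transpositions of independent actions needed to transform $u$ into $w$. The base case $n = 0$ is trivial because $u = w$. For the inductive step it suffices to treat the case where $u$ and $w$ differ by a single swap, say $u = u_0 \cdot a \cdot b \cdot u_1$ and $w = u_0 \cdot b \cdot a \cdot u_1$ with $\dom(a) \cap \dom(b) = \emptyset$; the general statement follows by chaining single swaps.

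First I would inspect the given run on $u$ at the moment immediately after $u_0$ has been executed, say at configuration $(q^0, v^0)$, and then look at the specific $a$- and $b$-transitions chosen. The key observation is that by Definition~\ref{df:network} the state-components $Q_p$ and the clock-sets $X_p$ of distinct processes are pairwise disjoint, so because $\dom(a)$ and $\dom(b)$ are disjoint, the transitions executed for $a$ use guards and resets involving only $\bigcup_{p \in \dom(a)} X_p$ and affect only the state-components $\{q(p) : p \in \dom(a)\}$, and symmetrically for $b$. Consequently the guard of $b$ does not see any clock reset by $a$ (and vice versa), so one can always fire $b$ first from $(q^0, v^0)$ and then $a$ with the same transitions; the intermediate valuation differs, but after both actions have been taken the state-component agrees in the two orders.

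Next I would continue the two executions with $u_1$ and compare. Since $\dom(a)$ and $\dom(b)$ are disjoint, every process $p$ participates in exactly the same subsequence of transitions in the swapped order as in the original one (just the two local transitions of different processes have been transposed), so firing the same local transitions in $u_1$ in the same local order produces the same state $q_u$. The valuation, however, will in general differ in global-time semantics, since the global timestamp of the reset of clocks in $\dom(a)$ vs.\ $\dom(b)$ has been interchanged; this is irrelevant because the lemma only claims equality of states.

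The main subtlety is to cleanly separate the state-component from the valuation-component: intuitively, the lemma says only that state reached is invariant under independent swaps, not the valuation. This mismatch between states (commutative) and valuations (not commutative in global time) is exactly what motivates the local-time semantics introduced in the next subsection, which will restore commutativity on valuations as well. Once the state/valuation separation is made explicit, the induction goes through routinely.
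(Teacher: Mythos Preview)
Your central claim---that from $(q^0,v^0)$ ``one can always fire $b$ first \ldots\ and then $a$ with the same transitions''---is false in global-time semantics, and it is precisely the phenomenon the paper highlights immediately after the lemma. Because delays are global, firing $b$ may force time to advance (say to satisfy a lower-bound guard $y\ge 2$); after that advance an upper-bound guard of $a$ such as $x\le 1$ can become permanently violated. The paper's own example ($a$ guarded by $x\le 1$, $b$ guarded by $y\ge 2$, disjoint domains) witnesses this: $ab$ is feasible from the initial configuration while $ba$ is not. So the swapped step you construct need not be a valid step, and the same obstruction hits your continuation along $u_1$ from the altered valuation. Your induction therefore manufactures intermediate runs that may fail to be runs at all, so the ``chaining single swaps'' reduction collapses.

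The paper's proof avoids this by never attempting to build one run from the other. In the basic case $u=ab$, $w=ba$ it uses that \emph{both} runs are given by hypothesis and observes that, since $a$ and $b$ touch disjoint components of the global state tuple, either order (whenever feasible) reaches the same discrete state; the qualifier ``whenever possible'' is doing real work. The clean way to phrase the general case---which you in fact almost reach---is purely combinatorial and requires no intermediate runs: for every process $p$, the component $q_u(p)$ is determined by $q(p)$ together with the sequence of $p$-transitions executed along the run; since $u\sim w$ implies that the projections of $u$ and $w$ to each process alphabet $\Sigma_p$ coincide, the two \emph{given} runs execute the same $p$-actions in the same local order, hence $q_u(p)=q_w(p)$ for every $p$. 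No feasibility of swapped executions is asserted anywhere.
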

\begin{proof}
  Consider the basic case when there are two actions $a$ and $b$ with
  disjoint domains and $u = ab$, $w = ba$. Since $a$ and $b$ are on
  disjoint processes, executing $ab$ or $ba$ (whenever possible) leads
  to the same (discrete) state by definition.

  Consider a general $u$. Sequence $w$ is obtained by repeatedly
  permuting adjacent actions. From the basic case of the lemma, each
  permutation preserves the source and target (discrete) states, if it
  is feasible.
\end{proof}

Observe that in the above lemma we cannot assert that $v_{u}=v_{w}$.
Even further, the existence of $(q,v) \act{u} (q_{u},v_{u})$ does not
imply that a run from $(q,v)$ on $w$ is feasible. This happens due to
global time delays, i.e., delays that involve all the processes. For
example, consider actions $a$ and $b$ on disjoint
processes with $a$ having guard $x \le 1$ and $b$ having guard
$y \ge 2$. Then from the initial valuation one can execute $ab$ but
not $ba$.

One solution to get commutativity between actions with disjoint
domains is to consider local-time
semantics~\cite{Bengtsson:CONCUR:1998}. In this semantics, time elapses
independently in every process, and time elapse is synchronized before
executing a synchronized action.  This way, two actions with disjoint
domains become commutative. In the example from the previous
paragraph, while the process 
executing $b$ elapses $2$ time units, the other process is allowed to
not elapse time at all and hence $ba$ becomes possible.  Moreover, for
the reachability problem, local-time semantics is equivalent to the
standard one. 

In the local-time semantics, we replace the clock $t$ which was
tracking the global time, with individual \emph{reference clocks
  $t_p$} for each process $A_p$ which track the local time of each
process.  We set $\wX_p'= \wX_p \cup \set{t_p}$ and
$\wX' = \bigcup_p \wX'_p$.  A \emph{local valuation} $\lv$ is a
valuation over the set of clocks $\wX'$ such that
$\lv(t_p)\geq \lv(\wx)$ for all processes $p\in \Proc$ and all clocks
$\wx\in \wX_p$\label{page:local-valuation-restriction}.  This
restriction captures the intuition that $t_p$ is a reference clock for
process $p$, and it is never reset. In this setting, the value
$\lv(t_p)-\lv(\wx)$ of clock $x$ is defined relative to the
reference clock $t_p$ of process $p$ that owns $x$, i.e. $x \in X_p$.
We will use the notation $\lv$ for local valuations to distinguish
from global valuations $v$.

We introduce a new operation of local time elapse.  For a process
$p\in\Proc$ and $\d\in\Rpos$, operation $\lv+_p\d$ adds $\d$ to
$\lv(t_p)$, the value of the reference clock $t_p$ of process $p$, and
leaves the other variables unchanged. Formally,
$(\lv+_p\d)(t_p) = \lv(t_p) + \d$ and $(\lv +_p \d)(y) = \lv(y)$ for
all $y \in \wX' \setminus \{t_p\}$. A local valuation $\lv$ satisfies
a clock constraint $g$, denoted $\lv \models g$ if every constraint in
$g$ holds after replacing $x$ by its value $\lv(t_p) - \lv(\wx)$ where
$p$ is the process such that $x \in X_p$. We denote by $[R]\lv$ the
valuation obtained after resetting the clocks in $R \incl X$ and
defined by: $([R]\lv)(t_p) = \lv(t_p)$ for every reference clock
$t_p$, $([R]\lv)(\wx) = \lv(\wx)$ if $x \not\in R$, and
$([R]\lv)(\wx) = \lv(t_p)$ if $x \in R$ and $p$ is the process such
that $x \in X_p$.

\begin{definition}[Local steps of a timed automata network]
  There are two kinds of local steps in a network $\Nn$: \emph{local
    delay}, and \emph{local action}.  A local delay $\d \in \Rpos$ in
  process $p \in Proc$ is a step $(q,\lv)\lstep{p,\d}(q,\lv+_p\d)$.
  For an action $b$, we have a step $(q,\lv)\lstep{b}(q',\lv')$ 
  if there is a set of $b$-transitions of respective processes
  $\set{(q_p,g_p,R_p,q'_p)}_{p\in\dom(b)}$
   such that:
  \begin{itemize}
    \item  $q_p=q(p)$, $q'_p=q'(p)$, for $p\in\dom(b)$, and $q(p)=q'(p)$  for
    $p\not\in\dom(b)$; 
    \item start times are synchronized: $\lv(t_{p_1})=\lv(t_{p_2})$, for
    every $p_1,p_2\in\dom(b)$;
    \item guards are satisfied: $\lv\sat g_p$, for every $p\in\dom(b)$;
    \item resets are performed: $\lv'=[\bigcup_{p\in \dom(b)}R_p]\lv$;
  \end{itemize}
\end{definition}

The main difference with respect to global semantics is the presence
of local time delay.  As a result, every process can be in a different
local time as emphasized by the reference clocks in each process.  In
consequence, in local action steps we require that when processes do a
common action their local times should be the same.
Of course a standard delay $\d$ on all processes can be simulated by a
sequence of delays on every process separately, as
$\lstep{1,\d}\cdots\lstep{k,\d}$.  For a sequence of local delays
$\D=(p_1,\d_1)\dots(p_n,\d_n)$ we will write
$(q,\lv)\lstep{\D}(q,\lv')$ to mean
$(q,\lv)\lstep{p_1,\d_1}(q,\lv_1)\lstep{p_2,\d_2}\cdots\lstep{(p_n,\d_n)}(q,\lv')$.

\begin{definition}[Local run]\label{def:local-run}
  A \emph{local run} from a configuration $(q_0,\lv_0)$ is a sequence
  of local steps.  For a sequence of actions $u=b_1\dots b_n$, we
  write $(q_0,\lv_0)\lact{u}(q_n,\lv_n')$ if for some sequences of local
  delays $\D_0,\dots,\D_n$ there is a local run
  \begin{equation*}
    (q_0,\lv_0)\lstep{\D_0}
    (q_0,\lv_0')\lstep{b_1}
    (q_1,\lv_1) \lstep{\D_1}
    \cdots
    \lstep{b_n}
    (q_n,\lv_n)\lstep{\D_n}(q_n,\lv'_n)
  \end{equation*}
\end{definition}
Observe that a run may start and end with a sequence of delays.  In
the next section we will make a link between local and global runs.
For this we will first examine independence properties of local runs
which are much better than for global runs (cf.\
Lemma~\ref{lem:small-commutation}).

\begin{lemma}[Independence]\label{lem:independence}
  Suppose $\dom(a)\cap\dom(b)=\es$. If $(q,\lv)\lact{ab}(q',\lv')$
  then $(q,\lv)\lact{ba}(q',\lv')$.  If $(q,\lv)\lact{a}(q_a,\lv_a)$
  and $(q,\lv)\lact{b}(q_b,\lv_b)$ then
  $(q,\lv)\lact{ab}(q_{ab},\lv_{ab})$ for some $q_{ab}$ and
  $\lv_{ab}$.
\end{lemma}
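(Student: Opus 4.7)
The plan rests on a simple locality observation: a local delay $\lstep{p,\d}$ modifies only the variable $t_p$, and a local action step for an action $c$ reads and writes only the variables $\{t_p\}\cup\wX_p$ for $p\in\dom(c)$. Since $\dom(a)\cap\dom(b)=\es$, the variables touched by $a$ are disjoint from those touched by $b$; consequently, neither action can enable nor disable the other, and their resets do not interfere. Two local delays on distinct processes trivially commute, and two delays on the same process combine by additivity, so we may reschedule delays freely per process.

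For the first claim, I would unpack the given run as
\[
(q,\lv)\lstep{\D_0}(q,\lv_0')\lstep{a}(q_1,\lv_1)\lstep{\D_1}(q_1,\lv_1')\lstep{b}(q',\lv_2)\lstep{\D_2}(q',\lv'),
\]
and for each process $p$ and each $i\in\{0,1,2\}$ let $T_p^{(i)}$ denote the total delay that $\D_i$ assigns to $p$. A candidate run on $ba$ is then built in four blocks: first elapse $T_p^{(0)}+T_p^{(1)}$ on each $p\in\dom(b)$ and fire $b$; next elapse $T_p^{(0)}$ on each $p\in\dom(a)$ and fire $a$; finally elapse the remaining amounts ($T_p^{(1)}+T_p^{(2)}$ on $\dom(a)$, $T_p^{(2)}$ on $\dom(b)$, and $T_p^{(0)}+T_p^{(1)}+T_p^{(2)}$ on every other $p$). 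To verify that $b$ fires at its step, I would observe that the values of $\{t_p\}\cup\wX_p$ for $p\in\dom(b)$ coincide with their values just before $b$ in the original run, since they have received the same cumulative delay and were untouched by $a$; the symmetric observation gives the enabledness of $a$. Because every process receives the same total delay, and each of $a$ and $b$ is applied once with the same resets, the final valuation coincides with $\lv'$.

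For the second claim, given runs realizing $(q,\lv)\lact{a}(q_a,\lv_a)$ and $(q,\lv)\lact{b}(q_b,\lv_b)$, I would extract from the $a$-run the subsequence $D^a$ of delays on processes in $\dom(a)$ occurring before $a$, and analogously $D^b$ from the $b$-run. By the same locality argument, the sequence $D^a,\,a,\,D^b,\,b$ is a valid local run from $(q,\lv)$: after $D^a$ the $\dom(a)$-clocks take the values required for $a$; $a$ leaves the $\dom(b)$-clocks untouched; after $D^b$ the $\dom(b)$-clocks take the values required for $b$. This yields the desired $(q_{ab},\lv_{ab})$.

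The only real work is the per-process delay bookkeeping in the first part; once the locality of delays and of action steps is formalized and the disjointness of the touched variable sets is noted, both parts follow mechanically.
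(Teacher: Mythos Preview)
Your proposal is correct and follows essentially the same approach as the paper. Both arguments partition the delay sequences by process domain and reshuffle them so that the relevant $\dom(b)$-delays occur before firing $b$ (and symmetrically for $a$); your per-process totals $T_p^{(i)}$ are a slightly more explicit rendering of what the paper does by extracting subsequences of delays indexed by $\dom(a)$ and $\dom(b)$, but the underlying mechanism and the locality observation driving it are identical.
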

\begin{proof}
 Take a run
  $(q,\lv)\lact{\D_a}(q,\lv_a)\lact{a}(q_a,\lv'_a)\lact{\D_b}(q_a,\lv_b)\lact{b}(q',\lv_b')
  \lact{\D} (q', \lv')$.
  Let $\D'_b$ be a sequence of delays from $\D_a$ or $\D_b$
  involving a process $p\in \dom(b)$, i.e., pairs $(p,\d_p)$ from
  $\D_a$ or $\D_b$ such that $p \in \dom(b)$. Let $\D'_a$ be a
  sequence of delays in $\D_a$ involving processes in $\dom(a)$; and
  finally let $\D'$ be the delays in $\D_a \cup \D_b \cup \D$ which
  were not counted in $\D'_a$ or $\D'_b$.   
  Since $\dom(a)\cap\dom(b)=\es$ we get
  that the following sequence is a run:
  $(q,\lv)\lact{\D'_b}(q,\lv''_b)\lact{b}(q_b,\lv'''_b)\lact{\D'_a}(q',\lv''_a)\lact{a}(q',\lv'''_a)\lact{\D'}(q',\lv')$.

  For the second part, suppose $(q, \lv) \lact{\D_a} \lact{a} \lact{\D'_a}
  (q_a, \lv_a)$ and $(q, \lv)  \lact{\D_b} \lact{b} \lact{\D'_b}  (q_b,
  \lv_b)$. Let $\D_1$ be the delays involving processes in $\dom(a)$ in
  $\D_a$, and $\D_2$ the delays of processes in $\dom(b)$ in
  $\D_b$. As $\dom(a) \cap \dom(b) = \emptyset$, $(q,\lv) \lact{\D_1} \lact{a}\lact{\D_2}\lact{b} (q_{ab},
  \lv_{ab})$ is a  local run.
\end{proof}

Recall that two sequences of actions are equivalent, written $u\sim w$
if one can be obtained from the other by repeatedly permuting adjacent
actions with disjoint domains.  Directly from the previous lemma we
obtain.

\begin{lemma}\label{lem:independence-in-local-valuations}
  If $(q_0,\lv_0)\lact{u}(q_n,\lv_n)$ and $u\sim w$ then
  $(q_0,\lv_0)\lact{w}(q_n,\lv_n)$.
\end{lemma}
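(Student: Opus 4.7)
The plan is to reduce the statement to a single application of Lemma~\ref{lem:independence} by induction on the length of the derivation witnessing $u \sim w$. Recall that $u \sim w$ means there is a finite chain $u = u^{(0)}, u^{(1)}, \ldots, u^{(m)} = w$ such that each $u^{(i+1)}$ is obtained from $u^{(i)}$ by swapping a single pair of adjacent actions whose process domains are disjoint. I will induct on $m$; the base case $m=0$ is trivial since $u = w$, so by transitivity it suffices to prove the claim when $u$ and $w$ differ by exactly one adjacent swap.

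For the inductive step, write $u = u_1 \cdot a b \cdot u_2$ and $w = u_1 \cdot b a \cdot u_2$ with $\dom(a) \cap \dom(b) = \emptyset$. Given a local run witnessing $(q_0, \lv_0) \lact{u} (q_n, \lv_n)$, I would split it according to Definition~\ref{def:local-run} as
\begin{equation*}
  (q_0, \lv_0) \lact{u_1} (q, \lv) \lact{a b} (q', \lv') \lact{u_2} (q_n, \lv_n),
\end{equation*}
where the three segments inherit their interleaved delay sequences $\D_i$ from the original run. The first part of Lemma~\ref{lem:independence} applied to the middle segment gives $(q, \lv) \lact{b a} (q', \lv')$ with the very same source and target configurations. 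Concatenating the unchanged prefix on $u_1$ and suffix on $u_2$ produces a local run witnessing $(q_0, \lv_0) \lact{w} (q_n, \lv_n)$, as required.

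Since each single swap preserves both the discrete state and the local valuation, iterating the argument along the chain $u^{(0)}, \ldots, u^{(m)}$ establishes the lemma. I do not foresee any genuine obstacle: the only mild bookkeeping concern is that a local run permits delays interleaved around each action, but Lemma~\ref{lem:independence} has already been proved in exactly this generality, by regrouping the delays $\D_a, \D_b$ according to whether they involve a process in $\dom(a)$ or in $\dom(b)$. Hence the present proof is essentially a one-step invocation of that lemma lifted through induction.
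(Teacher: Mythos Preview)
Your proof is correct and is essentially the same approach as the paper's: the paper simply states that the lemma follows directly from Lemma~\ref{lem:independence}, and your induction on the length of the swap-chain is precisely the routine unfolding of that remark.
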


With local-time semantics two equivalent sequences not only
reach the same state $q_n$, but also
the same local valuation $\lv_n$ (in contrast with
Lemma~\ref{lem:small-commutation} for global-time semantics).

\subsection{Comparing local and global runs}

We have presented two semantics for networks of timed automata:
global-time and local-time. Local runs have more freedom as time can elapse
independently in every process. Yet, with respect to state reachability
the two concepts turn out to be  equivalent.

\begin{definition}\label{def:synchronized-val}
  A local valuation $\lv$ is \emph{synchronized} if for every
pair of processes $p_1,p_2$, the values of their reference clocks are equal:
$\lv(t_{p_1})=\lv(t_{p_2})$.
\end{definition}

For a synchronized local valuation $\lv$, let $\gval(\lv)$ be the
global valuation $v$ such that $v(\wx) = \lv(\wx)$ and
$v(t) = \lv(t_1) = \cdots = \lv(t_k)$. Conversely, to every
global valuation $v$, we associate the synchronized local valuation
$\lval(v) = \lv$ where $\lv(\wx) = v(\wx)$ and $\lv(t_p) =
v(t)$ for every reference clock $t_p$.

Before we prove the main observation of this section (Lemma~\ref{lem:eq-local-global}), we develop some
intermediate observations.

Every action step in a local run:

\begin{equation*}
  (q_0,\lv_0)\lstep{\D_0}(q_0,\lv_0')\lstep{b_1}(q_1,\lv_1)
  \dots\lstep{\D_{n-1}}(q_{n-1},\lv'_{n-1})\lstep{b_n}(q_n,\lv_n)\lstep{\D_n}(q_n,\lv'_n)
\end{equation*}
has its execution time; namely the step
$(q_{i-1},\lv'_{i-1})\lstep{b_{i}}(q_{i},\lv_{i})$ has the execution
time $\lv'_{i-1}(t_p)=\lv_{i}(t_p)$ for $p\in\dom(b_{i})$.  Observe
that by definition of a step, the choice of $p$ does not matter.

We will say that a local run is \emph{soon} if for every $i$, the
execution time of $b_i$ is not bigger than the execution time of
$b_{i+1}$.

\begin{lemma}\label{lem:local-means-local-soon}
  If $(q_0,\lv_0)\lact{u}(q_n,\lv_n)$ is a local run then there is
  $w\sim u$ such that $(q_0,\lv_0)\lact{w}(q_n,\lv_n)$ is a soon local
  run.
\end{lemma}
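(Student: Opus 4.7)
The plan is to induct on a measure of how far the run is from being soon. Given a local run
\[
R:\ (q_0,\lv_0)\lstep{\D_0}(q_0,\lv_0')\lstep{b_1}(q_1,\lv_1)\lstep{\D_1}\cdots\lstep{b_n}(q_n,\lv_n)\lstep{\D_n}(q_n,\lv'_n),
\]
write $\tau_i$ for the execution time of $b_i$, namely $\lv_i(t_p)$ for any $p\in\dom(b_i)$ (well-defined since start times are synchronized and reference clocks are not reset). Let $I(R)$ count the pairs $i<j$ with $\tau_i>\tau_j$; the run is soon precisely when $I(R)=0$. I will show that whenever $I(R)>0$ one can produce a run $R'$ on an action sequence $u'\sim u$, with the same endpoints as $R$, and with $I(R')=I(R)-1$.

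The key step is the claim: if $\tau_i>\tau_{i+1}$ for some $i$, then $\dom(b_i)\cap\dom(b_{i+1})=\emptyset$. Suppose $p$ lies in this intersection. Since reference clocks are never reset by a discrete step, $\lv_i(t_p)=\tau_i$. The intervening local delays $\D_i$ add only non-negative amounts to any reference clock, so $\lv'_i(t_p)\geq \lv_i(t_p)=\tau_i$. But $p\in\dom(b_{i+1})$ forces $\tau_{i+1}=\lv'_i(t_p)$, yielding $\tau_{i+1}\geq\tau_i$, a contradiction.

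If $I(R)>0$ then there is some adjacent inversion $\tau_i>\tau_{i+1}$. By the claim, $\dom(b_i)\cap\dom(b_{i+1})=\emptyset$, so Lemma~\ref{lem:independence} applied to the sub-run $(q_{i-1},\lv'_{i-1})\lact{b_ib_{i+1}}(q_{i+1},\lv'_{i+1})$ produces a local run between the same endpoints on $b_{i+1}b_i$. Splicing this in place of the original pair yields a run $R'$ from $(q_0,\lv_0)$ to $(q_n,\lv_n)$ on a sequence $u'\sim u$. Because $\dom(b_i)$ and $\dom(b_{i+1})$ are disjoint, in $R'$ each of $b_i$ and $b_{i+1}$ sees the same local time on its relevant processes as in $R$, so the new execution-time sequence differs from the old one only by swapping the values $\tau_i$ and $\tau_{i+1}$. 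The pair $(i,i+1)$ is no longer inverted, while every other pair keeps its inversion status, so $I(R')=I(R)-1$.

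The main subtlety is verifying that the execution times are genuinely preserved under the swap. This requires unpacking the construction in the proof of Lemma~\ref{lem:independence}: the delays from $\D_{i-1},\D_i,\D_{i+1}$ are regrouped so that processes in $\dom(b_{i+1})$ receive exactly the same cumulative delay before reaching $b_{i+1}$ as they did in $R$, and likewise for $\dom(b_i)$ before $b_i$; the remaining delays (on processes in neither domain) are deferred and contribute only to later steps. Once this is checked, iterating the swap at most $I(R)$ times terminates in a soon local run on some $w\sim u$ between $(q_0,\lv_0)$ and $(q_n,\lv_n)$, as required.
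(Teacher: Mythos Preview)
Your proof is correct and takes essentially the same approach as the paper: find an adjacent pair of actions whose execution times are out of order, observe their domains must be disjoint, swap them via Lemma~\ref{lem:independence} (the paper invokes the derived Lemma~\ref{lem:independence-in-local-valuations}), and repeat. Your version is more careful about termination (via the inversion count $I(R)$) and about why execution times are preserved under a swap; the paper simply asserts one can repeat the operation ``if necessary'' until the run is soon.
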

\begin{proof}
  Consider a sequence $u=b_1\dots b_n$ and a run
  $(q_0,\lv_0)\lstep{\D_0}(q_0,\lv_0')\lstep{b_1}(q_1,\lv_1) \cdots$
  $\lstep{\D_{n-1}}(q_{n-1},\lv'_{n-1})\lstep{b_n}(q_n,\lv_n)
  \lstep{\D_{n}} (q_n, \lv'_n)$.  Suppose that the order of execution
  times of $b_i$ and $b_{i+1}$ is reversed.  Then
  $\dom(b_i)\cap\dom(b_{i+1})=\es$ by the definition of a run.  So we
  can take $u'=b_1\dots b_{i+1}b_i\dots b_n$ where the order of $b_i$
  and $b_{i+1}$ is reversed.  Since $u\sim u'$, by Lemma
  \ref{lem:independence-in-local-valuations} we have a run
  $(q_0,\lv_0)\lact{u'}(q_n,\lv_n)$.  We can, if necessary, repeat
  this operation from $u'$ till we get the desired $w$.
\end{proof}

\begin{lemma}\label{lem:run-local-global}
  If $(q,\lv)\lact{u}(q',\lv')$ is a local run where $\lv$ and $\lv'$ are
  synchronized valuations, then there is $w\sim u$ and a global run
  $(q,\gval(\lv))\act{w}(q',\gval(\lv'))$.
\end{lemma}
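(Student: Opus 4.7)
My plan is to first apply Lemma~\ref{lem:local-means-local-soon} to replace $u$ with an equivalent sequence $w \sim u$ along which the local run $(q,\lv)\lact{w}(q',\lv')$ is soon, i.e.\ the execution times $\tau_1 \le \tau_2 \le \cdots \le \tau_n$ of successive actions are non-decreasing. By Lemma~\ref{lem:independence-in-local-valuations}, the endpoints of this new run are still $(q,\lv)$ and $(q',\lv')$, so without loss of generality I assume the original run is soon.

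Next, I build a global run on $w$ by replacing the local delay blocks with global delays. Since $\lv$ is synchronized, let $\tau_0 = \lv(t_p)$ (the same value for every $p$), and since $\lv'$ is synchronized, let $\tau_{n+1} = \lv'(t_p)$. Using that the run is soon together with the fact that every $\lstep{p,\d}$ with $\d \ge 0$ only increases $\lv(t_p)$, one checks that $\tau_0 \le \tau_1 \le \cdots \le \tau_n \le \tau_{n+1}$. I then consider the global run
\begin{equation*}
  (q,\gval(\lv))\step{\tau_1-\tau_0}(q,v_0')\step{b_1}(q_1,v_1)\step{\tau_2-\tau_1}(q_1,v_1')\step{b_2}\cdots\step{b_n}(q_n,v_n)\step{\tau_{n+1}-\tau_n}(q_n,v_n').
\end{equation*}

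The main verification is that each $b_i$-step is enabled and that the final global valuation is exactly $\gval(\lv')$. For this I would prove by induction on $i$ the following invariant, which is the heart of the argument: immediately before the $i$-th action, the global valuation $v_{i-1}'$ satisfies $v_{i-1}'(t)=\tau_i$ and $v_{i-1}'(\wx) = \lv_{i-1}'(\wx)$ for every clock $x \in X_p$ with $p \in \dom(b_i)$. The offset equality holds because in both semantics $\wx$ stores the execution time of the last action in $\dom(b_j)$ (for the process $p$ owning $x$) that reset $x$, or the initial value if no such reset occurred; the sequence of actions and resets is identical in the two runs. Combined with $v_{i-1}'(t)=\tau_i = \lv_{i-1}'(t_p)$, this gives $v_{i-1}'(t)-v_{i-1}'(\wx) = \lv_{i-1}'(t_p)-\lv_{i-1}'(\wx)$ for all such $x$, so the guard $g_p$ holds in $v_{i-1}'$ iff it holds in $\lv_{i-1}'$, and the reset in $b_i$ updates $\wx$ to the same value $\tau_i$ in both semantics. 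A final application of this invariant at step $n+1$ together with synchronization of $\lv'$ yields $v_n' = \gval(\lv')$.

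The main obstacle is keeping the bookkeeping of offsets straight: at the time of action $b_i$, the global valuation only agrees with $\lv_{i-1}'$ on the reference clocks of processes in $\dom(b_i)$ (elsewhere the local reference clocks may lag), so one must be careful to state the invariant only for the clocks that matter for firing $b_i$, and to propagate equality at the end using that $\lv'$ is synchronized, which forces all lagging processes to have elapsed up to time $\tau_{n+1}$.
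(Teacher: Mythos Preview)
Your proposal is correct and follows essentially the same route as the paper: first pass to a soon run via Lemma~\ref{lem:local-means-local-soon}, then replace the local delay blocks by global delays $\tau_i-\tau_{i-1}$ and verify by induction that guards are enabled and the target valuation matches. One small remark: the paper states the stronger invariant $v_i(\wx)=\lv_i(\wx)$ for \emph{all} offset variables $\wx$ (not just those owned by processes in $\dom(b_i)$), which your own justification (``$\wx$ stores the execution time of the last reset'') already establishes; using this stronger form makes the final step $v_n'=\gval(\lv')$ immediate and removes the bookkeeping worry you flag at the end.
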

\begin{proof}
  Let $(q,\lv) = (q_0, \lv_0)$ and $(q',\lv') = (q_n, \lv'_n)$. We take a
  local run, and assume that it is soon thanks to Lemma~\ref{lem:local-means-local-soon}:
  \begin{align*}
    & (q_0,\lv_0)\lstep{\D_0}(q_0,\lv_0')\lstep{b_1}(q_1,\lv_1) \lstep{\D_1} \cdots \\
    & \cdots~ (q_{n-1},\lv_{n-1})\lstep{\D_{n-1}}(q_{n-1},\lv'_{n-1})\lstep{b_n}(q_n,\lv_n)\lstep{\D_n}(q_n,\lv'_n)\ .
  \end{align*}

  Let $\th_i$ be the execution time of action $b_i$.  For convenience,
  we set $\th_0 = \lv_0(t_p)$ and $\th_{n+1}=\lv'_n(t_p)$, for some process $p$.  Since
  $\lv_0$ and $\lv'_n$ is synchronized, the choice of $p$ is irrelevant.  We claim
  that there is a global run
  \begin{equation*}
    (q_0,v_0)\step{\d_1}\step{b_1}(q_1,v_1)\step{\d_2}\step{b_2}
    \cdots\step{\d_i}\step{b_i}(q_i,v_i)    
  \end{equation*}
  with
  \begin{itemize}
  \item $v_0 = \gval(\lv_0)$
  \item $\d_i=\th_{i}-\th_{i-1}$ for
    $i=1,\dots,n$ 
  \item $v_i(t) = \lv_i(t_p)$ for some $p \in \dom(b_i)$ and 
  \item $v_i(\wx) = \lv_i(\wx)$ for all other clocks $x$
  \end{itemize}
  
  This statement is proved by induction on $i$.  For $i=n$, this
  statement gives a global valuation $v_n$ such that $v_n(t) =
  \lv_n(t_p)$ where $p \in \dom(b_n)$ and for all other clocks
  $v_n(\wx) = \lv_n(\wx)$. Note that valuations $\lv_n$ and $\lv'_n$
  differ only in the values of the reference clocks. Moreover, in
  $\lv'_n$, all reference clocks are at $\th_{n+1}$. A global delay of
  $\d_{n+1} = \th_{n+1} - \th_{n}$ from $(q_n, v_n)$ gives $(q_n,
  v'_n)$ such that $v'_n = \gval(\lv'_n)$. 
\end{proof}

\begin{lemma}\label{lem:run-global-local}
If $(q,v)\act{u}(q',v')$ is a global run, then there is a local
  run $(q,\lval(v))\lact{u}(q',\lval(v'))$.
\end{lemma}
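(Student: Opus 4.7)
The plan is to simulate the global run step by step, replacing each global delay $\d$ by a sequence of $k$ local delays $(1,\d)(2,\d)\cdots(k,\d)$, one per process. This keeps all reference clocks synchronized throughout the simulation, so the synchronization condition $\lv(t_{p_1})=\lv(t_{p_2})$ required by local action steps is trivially satisfied at every action.

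I would proceed by induction on the number of action steps in the global run, maintaining the invariant that after simulating the global prefix up to the $i$-th action, the current local valuation equals $\lval(v_i)$, where $v_i$ is the corresponding global valuation. For the base case, the initial local valuation is $\lval(v)$ by hypothesis; simulating the initial global delay $\d_0$ by the sequence $(1,\d_0)\cdots(k,\d_0)$ increases every reference clock by $\d_0$ and leaves every offset variable unchanged, producing exactly $\lval(v+\d_0)$.

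For the inductive step, consider an action step $(q_{i-1},v'_{i-1})\step{b_i}(q_i,v_i)$ followed by a global delay $\d_i$. Starting from the local valuation $\lval(v'_{i-1})$, the local action step on $b_i$ is enabled because: (i) every reference clock carries the same value $v'_{i-1}(t)$, so $\lv(t_{p_1})=\lv(t_{p_2})$ holds automatically for all $p_1,p_2\in\dom(b_i)$; (ii) for each clock $x\in X_p$, the value $\lval(v'_{i-1})(t_p)-\lval(v'_{i-1})(\wx)=v'_{i-1}(t)-v'_{i-1}(\wx)=\cv(x)$ coincides with the value used by the global semantics, so guards evaluate identically; (iii) performing the same set of resets yields $\lval(v_i)$, since offset updates are defined identically in both semantics and reference clocks are untouched. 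The subsequent global delay $\d_i$ is then simulated exactly as in the base case, restoring the invariant.

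The argument is essentially a bookkeeping exercise, and I do not anticipate any real obstacle. The only step that requires care is unfolding the definitions of $\lval$, the local action step, and the delay operations in the inductive case to verify that they align; this alignment is forced by the fact that a synchronized local valuation and its corresponding global valuation carry exactly the same information about clock values. Note that, unlike in Lemma~\ref{lem:run-local-global}, no reordering of actions is required, which is why the global sequence $u$ is preserved verbatim by the constructed local run.
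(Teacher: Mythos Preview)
Your proposal is correct and follows exactly the approach the paper takes: the paper's one-line proof simply says ``A global run can be directly converted to a local run by changing a global delay to a sequence of local delays,'' and you have unpacked this into the explicit induction with the synchronized-valuation invariant. Your version is more detailed than the paper's, but the underlying idea is identical.
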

\begin{proof}
  A global run can be directly converted to a local run by changing a
  global delay to a sequence of local delays.
\end{proof}

\begin{lemma}\label{lem:eq-local-global}
  If $(q,\lv)\lact{u}(q',\lv')$ is a local run where $\lv$ and $\lv'$
  are synchronized local valuations, there exists a global run
  $(q,\gval(\lv))\act{w}(q',\gval(\lv'))$ for some $w \sim
  u$. Conversely, if $(q,v)\act{u}(q',v')$ is a global run, then there
  is a local run $(q,\lval(v)) \lact{u} (q',\lval(v'))$.
\end{lemma}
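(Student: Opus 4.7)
The plan is to observe that this lemma is essentially the conjunction of Lemma~\ref{lem:run-local-global} and Lemma~\ref{lem:run-global-local}, which together cover both implications. So my proof would simply stitch the two together rather than re-prove anything from scratch.

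For the forward direction, I would invoke Lemma~\ref{lem:run-local-global} directly: given a local run $(q,\lv)\lact{u}(q',\lv')$ where both $\lv$ and $\lv'$ are synchronized, that lemma already produces a sequence $w\sim u$ together with a global run $(q,\gval(\lv))\act{w}(q',\gval(\lv'))$, which is exactly the conclusion required. For the converse direction, I would invoke Lemma~\ref{lem:run-global-local}: any global run $(q,v)\act{u}(q',v')$ yields the local run $(q,\lval(v))\lact{u}(q',\lval(v'))$ by replacing each global delay step $\step{\d}$ with the sequence of local delays $\lstep{1,\d}\cdots\lstep{k,\d}$ and keeping each action step untouched (the synchronization condition $\lv(t_{p_1})=\lv(t_{p_2})$ is automatically satisfied because the underlying valuation $\lval(v)$ is synchronized by construction, and stays synchronized after the simulated uniform delay).

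There is really no obstacle here, since all the work has been absorbed by the earlier lemmas. The only subtle point worth flagging in the write-up is why the forward direction needs ``for some $w\sim u$'' and cannot be strengthened to $w=u$: in local time, two independent actions $b_i$ and $b_{i+1}$ may occur in an order that does not match their execution times on a global clock, so passing from a local run to a global run may require permuting such adjacent independent actions to sort them by execution time. This permutation is precisely the purpose of Lemma~\ref{lem:local-means-local-soon}, which is what drives the appearance of $w\sim u$ in Lemma~\ref{lem:run-local-global}, and hence in this lemma as well.
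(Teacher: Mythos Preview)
Your proposal is correct and matches the paper's own proof, which simply states that the lemma follows from Lemma~\ref{lem:run-local-global} and Lemma~\ref{lem:run-global-local}. Your added remark about why the forward direction only yields $w\sim u$ rather than $w=u$ is accurate and a nice clarification, though the paper does not include it here.
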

\begin{proof}
Follows from lemma \ref{lem:run-local-global}
and \ref{lem:run-global-local}.
\end{proof}

The reachability problem with respect to local semantics is defined as
before: $q$ is reachable if there is a local run
$(q^\init,\lv^\init)\lact{u}(q,\lv)$ for some $\lv$ where
$\lv^\init = \lval(v^\init)$.  By adding some local delays at the end
of the run we can always assume that $\lv$ is synchronized.
Lemma~\ref{lem:eq-local-global} thus implies that the reachability
problem in local semantics is equivalent to the standard one in global
semantics.


\section{Zone graphs}
\label{sec:zone-graphs}

We introduce zones, a standard approach for solving reachability in timed automata. Zones are sets of valuations that can be represented efficiently using simple constraints.

Let us fix a network $\Nn$ of timed automata with $k$
processes. Recall that each process $p$ has a set of clocks $X_p$ and
corresponding offset variables $\wX_p$. The set of clocks in $\Nn$ is
$X = \bigcup_{i=1}^k X_p$. Similarly, the set of offset variables in
$\Nn$ is $\wX = \bigcup_{i=1}^k \wX_p$.

\subsection{Standard zone-based algorithm for reachability}

Recall that in the global-time semantics, $\Nn$ has a reference clock
$t$.  A \emph{global zone} is a set of global valuations described by
a conjunction of constraints of the form $y_1 - y_2 \lleq c$ where
$y_1,y_2 \in \wX \cup \{t\}$, $\lleq\ \in \{<, \leq \}$ and
$c \in \mathbb{Z}$. Since global valuations need to satisfy
$v(\wx) \leq v(t)$ for every offset variable $\wx \in \wX$, a global
zone satisfies $\wx \leq t$ for every $\wx \in \wX$.

Let $g$ be a guard and $R$ a set of clocks. We define the following
operations on zones: $Z_g = \{v \mid v \models g \}$ is the set of
global valuations satisfying $g$, $[R]Z := \{[R]v \mid v \in Z\}$ and
$\elapse{Z} := \{v \mid \exists v' \in Z,~\exists \d \in \Rpos \text{
  s. t. } v = v' + \d\}$. From~\cite{Bengtsson:CONCUR:1998}, $Z_g$,
$[R]Z$ and $\elapse{Z}$ are all zones. We say that a zone is
\emph{time-elapsed} if $Z=\elapse{Z}$.

The semantics of a network of timed automata can be described in terms
of global zones. For an action $b$, consider a set of
$b$-transitions of respective processes
$\set{(q_p,g_p,R_p,q'_p)}_{p\in\dom(b)}$.  Let
$R=\bigcup_{p\in\dom(b)} R_p$, and $g=\Land_{p\in \dom(b)} g_p$.  Then
we have a transition $(q,Z)\act{b} (q',Z')$ where
$Z' = \elapse{[R](Z \cap Z_g)}$ provided that $q(p)=q_p$, $q'(p)=q'_p$ if
$p\in\dom(b)$, and $q'(p)=q(p)$ otherwise, and $Z'$ is not empty. We
write $\tto{}$ for the union over all $\act{b}$.
  
The \emph{global zone graph} $ZG(\Nn)$ of a timed automaton network
$\Nn$ is a transition system whose nodes are of the form $(q, Z)$
where $q \in Q$ and $Z$ is a time-elapsed global zone. The transition
relation is given by $\tto{}$. The initial node is
$(q^\init, Z^\init)$ where $q^\init$ is the tuple of initial states
and $Z^\init = \elapse{\{v^\init\}}$ where $v^{init}$ is the initial
global valuation.
The zone graph $ZG(\Nn)$ is known to be sound and complete with
respect to reachability.  This means that a state $q$ is reachable by
a run of $\Nn$ iff a node $(q,Z)$ for some non-empty $Z$ is reachable
from $(q^\init, Z^\init)$ in the zone graph.  As zone graphs may be
infinite, an abstraction operator is used to obtain a finite quotient.

An \emph{abstraction operator} $\abs : P(\Rpos^X) \rightarrow
P(\Rpos^X)$ \cite{BBLP-STTT05} is a function from sets of valuations to sets of
valuations such that $W \subseteq \abs(W)$ and $\abs(\abs(W)) =
\abs(W)$. Simulation relations between valuations are a convenient way
to construct abstraction operators that are correct for reachability.
A \emph{time-abstract simulation} is a relation between valuations
that depends on a given network $\Nn$.  We say that $v_1$ can be
simulated by $v_2$, denoted $v_1\fleq v_2$ if for every state $q$ of
$\Nn$, and every delay-action step $(q,v_1)\act{\d_1}\act{b}(q',v'_1)$
there is a delay $\d_2\in \Rpos$ such that
$(q,v_2)\act{\d_2}\act{b}(q',v'_2)$ and $v'_1 \fleq v_2'$. The
simulation relation can be lifted to global zones: we say that $Z$ is
simulated by $Z'$, written as $Z \fleq Z'$ if for all $v \in Z$ there
exists a $v' \in Z'$ such that $v \fleq v'$. An abstraction $\abs$
based on $\fleq$ is defined as
$\abs(W) = \{ v~|~\exists v' \in W \text{ with } v \fleq v' \}$. The
abstraction $\abs$ is finite if its range is finite.  Given two nodes
$(q, Z)$ and $(q', Z')$ of $ZG(\Nn)$, $(q, Z)$ is \emph{subsumed} by
$(q', Z')$, denoted $(q, Z) \sqsubseteq^\abs (q', Z')$, if $q = q'$ and
$Z \incl \abs(Z')$.

\begin{remark}
  \label{remark:zones}
  Our definition of zones slightly differs from the standard
  definition in the literature (e.g.~\cite{Bengtsson:LCPN:2003}) since
  we use offset variables to represent clock valuations. Yet, finite
  time-abstract simulations from the
  literature~\cite{BBLP-STTT05,Herbreteau} can be
  adapted to our settings as a simulation over standard valuations
  $\cv$ can be expressed as a simulation over global valuations $v$
  since for every clock $x$, $\cv(x) = v(t) - v(\wx)$, and zones over
  valuations $v$ can be translated to zones over standard valuations
  $\cv$. 
\end{remark}

A finite abstraction $\abs$ allows to construct a finite \emph{global
  zone graph with subsumption} \label{def:zone-graph-subsumption}for a
network of timed automata $\Nn$.
The construction starts from the
initial node of $ZG(\Nn)$.  Using, say, a breadth-first-search (BFS),
for every constructed node we examine all its successors in $ZG(\Nn)$,
and keep only those that are maximal w.r.t. to $\sqsubseteq^\abs$
relation. Computing such a zone graph with subsumption gives an algorithm for
the reachability problem. However, the global zone graph, and hence the
algorithm above, are sensitive to the combinatorial explosion arising
from parallel composition. Global time makes any two actions
potentially dependent - the same is still true on the level of zones.
Zone graphs based on the local-time semantics, as presented next,
solve this problem.

\subsection{Local zone graphs}
\label{sec:local-zones}

The goal of this section is to introduce a concept similar to global
zones and global zone graphs for local-time semantics. Recall that in
the local-time semantics, each process $p$ has a reference clock
$t_p$.

A \emph{local zone} is a zone over local valuations: a set of local
valuations defined by constraints $y_1 - y_2 \lleq c$ where
$y_1, y_2 \in \wX \cup \{t_1, \dots t_k\}$. Recall that a local
valuation $\lv$ needs to satisfy: $\lv(\wx)\leq \lv(t_p)$ for every
process $p$ and every $\wx\in \wX_p$.  This means that a local zone
satisfies $\wx\leq t_p$ for every process $p$ and every
$\wx \in \wX_p$.  We will use $\lZ$, eventually with subscripts, to
range over local zones, and distinguish from global zones $Z$.  Local
zones are closed under all basic operations involved in a local step
of a network of timed automata, namely: local time elapse,
intersection with a guard, and reset of
clocks~\cite{Bengtsson:CONCUR:1998}. That is, for every local zone
$\lZ$:
\begin{itemize}
\item the set
  $\lelapse(\lZ)=\set{\lv+_1\d_1+_2\dots+_k\d_k \mid \lv\in\lZ,\
    \d_1,\dots,\d_k\in\Rpos}$ is a local zone.
\item for every guard $g$ the set $\lZ_g=\set{\lv \mid \lv \sat g}$ is
  a local zone.
\item for every set of clocks $R \incl X$, the set
  $[R]\lZ=\set{[R]\lv \mid \lv\in \lZ}$ is a local zone.
\end{itemize}
Local zones can be implemented using DBMs. Hence, they can be computed
and stored as efficiently as standard zones. The proofs of the above
three statements are given in the subsequent lemmas. 

A zone $\lZ$ is said to be in \emph{canonical form} if every
constraint defining $\lZ$ is tight: in other words, removing some
constraint $y_1 - y_2 \lleq c$ from $\lZ$ results in a different set
of valuations.

\begin{lemma}\label{local-time-elapse}
  The local zone for $\lelapse(\lZ)$ is obtained by removing
  constraints of the form $t_i - \wx \lleq c$, and $t_i - t_j \lleq c$
  for all $i, j$ and $\wx \in \wX$ from the canonical representation
  of $\lZ$.
\end{lemma}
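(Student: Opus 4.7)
The plan is to prove the equality of zones $\lelapse(\lZ) = \lZ'$, where $\lZ'$ denotes the set of local valuations satisfying all constraints of the canonical form of $\lZ$ except for those of the listed shapes. Writing $M[y_1,y_2]$ for the tightest bound on $y_1 - y_2$ in the canonical DBM of $\lZ$, the removed constraints are precisely the entries $M[t_i,\wx]$ and $M[t_i,t_j]$ for $i \neq j$. I would treat the two inclusions separately.

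The inclusion $\lelapse(\lZ)\subseteq\lZ'$ is a direct check: every constraint that survives has one of the shapes $\wx_1-\wx_2\leq c$ or $\wx-t_i\leq c$, and both are preserved under $\lv\mapsto\lv+_1\d_1\cdots+_k\d_k$ with $\d_i\geq 0$, because in the first case the left-hand side is unchanged and in the second it can only decrease.

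For the converse $\lZ'\subseteq\lelapse(\lZ)$, which is the substantive part, given $\lv'\in\lZ'$ I would construct a predecessor $\lv\in\lZ$ by keeping offsets, $\lv(\wx):=\lv'(\wx)$, and pushing each reference clock down to its tightest lower bound coming from the offsets:
\begin{equation*}
  \lv(t_i)\;:=\;\max_{\wx\in\wX}\bigl(\lv'(\wx)-M[\wx,t_i]\bigr).
\end{equation*}
Setting $\d_i:=\lv'(t_i)-\lv(t_i)$, the identity $\lv'=\lv+_1\d_1\cdots+_k\d_k$ holds by construction, and $\d_i\geq 0$ since the lower-bound constraints $\wx-t_i\leq M[\wx,t_i]$ of $\lZ$ were not removed and therefore already hold at $\lv'$. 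The remaining work is showing $\lv\in\lZ$, for which the fresh inequalities to check are the removed ones $t_i-t_j\leq M[t_i,t_j]$ and $t_i-\wx\leq M[t_i,\wx]$.

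The main obstacle is this last verification, but it reduces to the canonical-form triangle inequalities: to check $t_i-t_j\leq M[t_i,t_j]$, let $\wx^\star$ witness the maximum in the definition of $\lv(t_i)$; then $\lv(t_j)\geq\lv'(\wx^\star)-M[\wx^\star,t_j]$, so the required bound follows from $M[\wx^\star,t_j]\leq M[\wx^\star,t_i]+M[t_i,t_j]$. The bound $t_i-\wx\leq M[t_i,\wx]$ falls similarly out of $M[\wx^\star,\wx]\leq M[\wx^\star,t_i]+M[t_i,\wx]$ together with $\lv'$ satisfying the offset-only constraints. Without canonical form these triangle inequalities need not hold, which is why pushing each $\lv(t_i)$ independently to its isolated lower bound could otherwise leave $\lv$ outside $\lZ$; once canonical form is invoked, the lemma follows immediately.
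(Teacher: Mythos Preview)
Your argument is correct and matches the paper on the easy inclusion $\lelapse(\lZ)\subseteq\lZ'$. For the converse you take a genuinely different route. The paper does not construct the witness $\lv$ explicitly: it intersects $\lZ$ with the auxiliary zone $\lZ_2$ given by $\wx=\lv'(\wx)$ and $t_i\leq\lv'(t_i)$, and argues $\lZ\cap\lZ_2\neq\emptyset$ by invoking the standard DBM fact that two canonical zones have empty intersection only when a single constraint of one directly contradicts a single constraint of the other; inspecting the shapes of the constraints in $\lZ_2$ against those retained in $\lZ'$ rules this out. Your approach instead pushes each $t_i$ down to its tightest lower bound coming from the offsets and then verifies $\lv\in\lZ$ by hand via the canonical triangle inequalities. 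This is more constructive and self-contained---you never need the pairwise-contradiction characterization of emptiness---at the price of a slightly longer case analysis. One small caveat worth noting: your formula $\lv(t_i)=\max_{\wx}(\lv'(\wx)-M[\wx,t_i])$ implicitly treats all bounds as non-strict; if the maximizing constraint is strict, the value you pick lies exactly on the forbidden boundary, so a routine $\varepsilon$-slack (or working in the DBM weight semiring) is needed to make the construction literally go through. The paper's indirect argument sidesteps this technicality.
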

\begin{proof}
  Let $\lZ_1$ be the set of constraints of the form
  $\wx - t_i \lleq c$ and $\wx - \wy \lleq c$ of $\lZ$. To show
  $\lZ_1$ describes $\lelapse(\lZ)$.
  
  $\lelapse(\lZ) \incl \lZ_1$: Each valuation $\lv' \in \lelapse(\lZ)$
  is of the form $\lv +_1 \d_1 +_2 \d_2 \cdots +_k \d_k$ for some
  $\lv \in \lZ$. Hence $\lv'$ is obtained by increasing the values of
  reference clocks from $\lv$ and keeping the other offsets the
  same. This gives $\lv'(\wx) - \lv'(\wy) = \lv(\wx) - \lv(\wy)$ and
  $\lv'(\wx) - \lv'(t_i) \le \lv(\wx) - \lv(t_i)$ for all component
  clocks $\wx, \wy$ and reference clocks $t_i$. Since $\lv \in \lZ$
  and hence satisfies the constraints in $\lZ_1$, valuation $\lv'$
  satisfies them as well.

  $\lZ_1 \incl \lelapse(\lZ)$: Pick $\lv' \in \lZ_1$. Consider a set
  of constraints $\lZ_2$: $\wx = \lv'(\wx)$ and $t_i \le \lv'(t_i)$
  for all component clocks $\wx$ and process clocks $t_i$. A solution
  to $\lZ \cap \lZ_2$ gives a valuation $\lv$ such that $\lv \in \lZ$
  and $\lv'$ is obtained by local elapse from $\lv$, and hence will
  imply $\lv' \in \lelapse(\lZ)$. We need to show that
  $\lZ \cap \lZ_2$ is non-empty. Note that $\lZ_2$ is a canonical
  zone. Since both $\lZ$ and $\lZ_2$ are canonical, $\lZ \cap \lZ_2$
  is empty iff there is a constraint $y_1 - y_2 \lleq c$ from $\lZ$
  and a constraint $y_2 - y_1 \lleq d$ from $\lZ_2$ which contradict
  each other (using standard literature on zones and DBMs). Inspection
  of the constraints in $\lZ_2$ and the fact that $\lv' \in \lZ_1$
  imply that this cannot happen.
\end{proof}

\begin{lemma}
  For every guard, the set $\lv$ of local valuations satisfying $g$ is
  a local zone.
\end{lemma}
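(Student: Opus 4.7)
The plan is to proceed by structural induction on the guard $g$, using the observation that every atomic constraint $x \sim c$ appearing in a guard can be rewritten as a constraint of the form $y_1 - y_2 \lleq c'$ on the offset/reference variables, which is exactly the shape permitted in the definition of a local zone.

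First I would handle the base case, which is an atomic constraint $x \sim c$ with $x \in X_p$. By the semantics of $\sat$ for local valuations, $\lv \sat x \sim c$ holds iff $\lv(t_p) - \lv(\wx) \sim c$. Depending on $\sim$, this unfolds into one or two difference constraints over $\{\wx, t_p\} \subseteq \wX \cup \{t_1,\dots,t_k\}$: namely $t_p - \wx \le c$ (for $\le$), $t_p - \wx < c$ (for $<$), $\wx - t_p \le -c$ (for $\ge$), $\wx - t_p < -c$ (for $>$), and the conjunction of $t_p - \wx \le c$ with $\wx - t_p \le -c$ (for $=$). In each case this description is a local zone by definition. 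I would also remark that the implicit well-formedness constraints $\wx \le t_p$ for $\wx \in \wX_p$ must be added in, but since local valuations already satisfy these, intersecting with them does not change the described set.

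For the inductive step, suppose $g = g_1 \wedge g_2$ where each $g_i$ defines a local zone $\lZ_{g_i}$ by the induction hypothesis. Then $\lZ_g = \lZ_{g_1} \cap \lZ_{g_2}$, and the intersection is described by the conjunction of the defining difference constraints, which is again of the required form. Hence $\lZ_g$ is a local zone.

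I do not expect any obstacle here: the statement is essentially syntactic, and the only thing to be slightly careful about is choosing the correct process $p$ for each clock $x$ when translating an atomic constraint (so that the reference clock used is the one owned by the same process as $x$). Everything else follows from the observation that local zones are, by definition, closed under finite conjunctions of difference constraints over $\wX \cup \{t_1,\dots,t_k\}$.
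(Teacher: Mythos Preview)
Your proposal is correct and follows essentially the same approach as the paper: translate each atomic constraint $x \sim c$ with $x \in X_p$ into the difference constraint $t_p - \wx \sim c$, and take the conjunction. Your treatment is in fact more careful than the paper's one-line proof, since you explicitly unfold the cases $\sim\,\in\{<,\le,=,\ge,>\}$ into constraints of the form $y_1 - y_2 \lleq c'$ with $\lleq\,\in\{<,\le\}$ as required by the definition of a local zone, whereas the paper leaves this implicit.
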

\begin{proof}
  We construct a set of constraints over local variables. For every
  constraint $x\sim c$ in $g$ with $\wx\in \wX_p$, we add a constraint
  $t_p-\wx \sim c$. The obtained set of constraints gives the desired
  zone.
\end{proof}

\begin{lemma}
  For a set of clocks $R$, and a local zone $\lZ$, the set $[R]\lZ$ is
  a zone.
\end{lemma}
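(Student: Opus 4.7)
The plan is to mimic the standard DBM-based argument for the reset operation on global zones, adapted to the local setting. Recall that resetting a clock $x \in R \cap X_p$ replaces the offset $\wx$ by $\lv(t_p)$, while reference clocks and offsets of clocks outside $R$ are left untouched. The key observation is that the new value of $\wx$ is completely determined by the (unchanged) value of $t_p$, so every constraint can be rewritten to mention only the post-reset variables.

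Concretely, I would start from a canonical representation of $\lZ$ as a system of constraints $y_1 - y_2 \lleq c$ with $y_1, y_2 \in \wX \cup \{t_1,\ldots,t_k\}$. From this system I build a new system $\lZ'$ as follows. For every $x \in R \cap X_p$ and every constraint of $\lZ$ containing $\wx$, I replace $\wx$ by $t_p$; constraints involving no reset variable are kept as is. I then add, for each $x \in R \cap X_p$, the two constraints $\wx - t_p \leq 0$ and $t_p - \wx \leq 0$, encoding $\wx = t_p$ after reset. Every constraint of $\lZ'$ is of the form $y_1 - y_2 \lleq c$ with $y_1, y_2 \in \wX \cup \{t_1,\ldots,t_k\}$, so $\lZ'$ is a local zone by definition.

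It remains to check that $\lZ' = [R]\lZ$. For the inclusion $[R]\lZ \subseteq \lZ'$, take $\lv \in \lZ$ and set $\lv' = [R]\lv$. Then $\lv'(\wx) = \lv(t_p)$ for $x \in R \cap X_p$ and $\lv'(y) = \lv(y)$ for every other variable. Every original constraint of $\lZ$ is satisfied by $\lv$; after substituting $t_p$ for the reset offsets, the rewritten constraint expresses exactly the same inequality in terms of $\lv'$, and the added equalities $\wx = t_p$ hold by construction. Hence $\lv' \in \lZ'$. For the converse, pick $\lv' \in \lZ'$ and define $\lv$ that agrees with $\lv'$ on all variables except that $\lv(\wx)$ for $x \in R$ is left to be chosen. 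Any such $\lv$ produces $[R]\lv = \lv'$, so it suffices to show that one can pick values for the reset offsets so that $\lv \in \lZ$. This is the one step where I would have to argue carefully: I would use the canonicity of $\lZ$ (as in Lemma~\ref{local-time-elapse}) to conclude that the only constraints of $\lZ$ connecting the reset offsets to the other variables are those already implied (via the substitution $\wx \mapsto t_p$ together with $\wx = t_p$) by constraints in $\lZ'$, so that the partial valuation $\lv'$ can always be extended consistently.

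The main obstacle is precisely this extension step: in general, eliminating a variable from a system of difference constraints requires saturating the implied constraints between the surviving variables, and the argument only goes through because $\lZ$ is in canonical form. Once that is in place, $\lZ' = [R]\lZ$ and $[R]\lZ$ is a local zone, as required.
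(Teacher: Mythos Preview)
Your substitution step is wrong, and the forward inclusion $[R]\lZ \subseteq \lZ'$ fails. The constraint $\wx - y \lleq c$ in $\lZ$ speaks about the \emph{pre-reset} value of $\wx$; after reset this value is discarded, so the constraint should disappear, not be transferred to $t_p$. Concretely, take two processes with clocks $x\in X_1$, $y\in X_2$, and let $\lZ$ be the canonical local zone given by $\wx=\wy$ together with the implicit $\wx\le t_1$, $\wy\le t_2$. Resetting $R=\{x\}$ yields $[R]\lZ=\{\wx=t_1,\ \wy\le t_1,\ \wy\le t_2\}$. Your construction instead substitutes $t_1$ for $\wx$ in $\wx=\wy$, producing $\lZ'=\{\wx=t_1,\ t_1=\wy,\ \wy\le t_2\}$, which is strictly smaller: the valuation $\lv'=(\wx{=}5,\,t_1{=}5,\,\wy{=}3,\,t_2{=}3)$ lies in $[R]\lZ$ (it is $[R]\lv$ for $\lv=(\wx{=}3,\,t_1{=}5,\,\wy{=}3,\,t_2{=}3)\in\lZ$) but not in $\lZ'$. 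Your sentence ``the rewritten constraint expresses exactly the same inequality in terms of $\lv'$'' is the false step: evaluating $t_p-y\lleq c$ at $\lv'$ gives $\lv(t_p)-\lv(y)$, whereas the constraint that $\lv$ actually satisfies is $\lv(\wx)-\lv(y)\lleq c$, and these differ whenever $\lv(\wx)\neq\lv(t_p)$.

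The paper's argument does the opposite of substitution: it \emph{removes} every edge incident to $\wx$ in the (canonical) distance graph of $\lZ$, and only then adds the equalities $\wx - t_p \le 0$ and $t_p - \wx \le 0$. Removal from the canonical form is exactly existential projection of the old $\wx$, which is the semantically correct operation; canonicity guarantees that all information relating the surviving variables (such as $\wy\le t_1$ in the example) has already been propagated before the $\wx$-edges are dropped. Your own ``converse'' paragraph gets close to this idea, but the construction of $\lZ'$ needs to be projection-then-reset, not substitution-then-reset.
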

\begin{proof}
  For each $\wx \in R\cap \wX_p$, we remove all the edges involving
  $\wx$ in the distance graph of $\lZ$, and then add the constraints
  $\wx - t_p \leq 0$ and $t_p \leq \wx$.  The resulting zone is
  $[R]\lZ$.
\end{proof}

The operations of local time elapse, guard intersection, and reset,
enable us to describe a local step $(q,\lZ)\lact{b}(q',\lZ')$ on the
level of local zones. This is done in the same way as for global
zones.  Observe that a local step is indexed only by an action, as the
time is taken care of by the local time elapse operation.  Formally,
for an action $b$ consider a set of $b$-transitions
$\set{(q_p,g_p,R_p,q'_p)}_{p\in\dom(b)}$ of respective processes.
Then we have a transition $(q,\lZ)\act{b} (q',\lZ')$ for
$\lZ' = \lelapse([R](\lZ \cap \lZ_g \cap \lZ_{sync}))$ where
$\lZ_g = \bigcap_{p\in\dom(b)} \lZ_{g_p}$,
$\lZ_{sync} = \set{\lv \mid \lv(t_{p_1}) = \lv(t_{p_2}) \ \text{for} \
  p_1, p_2 \in \dom(b)}$ and $R = [\bigcup_{p\in\dom(b)}R_{p}]$.
Intuitively, $\lZ'$ is the set of valuations obtained through reset
and then local-time elapse, from valuations in $\lZ$ that satisfy the
guard and such that the processes involved in action $b$ are
synchronised. We extend $\lact{b}$ to $(q,\lZ)\lact{u}(q',\lZ')$ for a
sequence of actions $u$ in the obvious way.

Using local zones, we construct a local zone graph and show that it is
sound and complete for reachability testing.  The only missing step is
to verify the pre/post properties of runs on local zones.  We say a
local zone is \emph{time-elapsed} if $\lZ = \lelapse(\lZ)$.

\begin{lemma}[Pre and post properties of runs on local zones]
  \label{lem:pre-post}
  Let $u$ be a sequence of actions.
  \begin{itemize}
  \item If $(q,\lv)\lact{u}(q',\lv')$ and $\lv\in\lZ$ for some
    time-elapsed local zone $\lZ$ then $(q,\lZ)\lact{u}(q',\lZ')$ and
    $\lv'\in\lZ'$ for some local zone $\lZ'$.
  \item If $(q,\lZ)\lact{u}(q',\lZ')$ and $\lv'\in \lZ'$ then
    $(q,\lv)\lact{u}(q',\lv')$, for some $\lv\in\lZ$.
  \end{itemize}
\end{lemma}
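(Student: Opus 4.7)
The plan is to prove both parts simultaneously by induction on the length of $u$, reducing the work to a single-action claim. The base case $u = \varepsilon$ is immediate by taking $\lZ' = \lZ$: a run on the empty action sequence consists of a single block of local delays, so the first direction follows from $\lelapse(\lZ) = \lZ$ (using the time-elapsed assumption) and the second from taking $\lv = \lv'$. For the inductive step, I would decompose $u = u'b$ and apply the hypothesis at the intermediate pair $(q_1, \lZ_1)$ reached after $u'$. A crucial observation here is that $\lZ_1$ is itself time-elapsed, because every zone step in the construction terminates with an outer $\lelapse$, so the time-elapsed precondition of the first part is preserved through the induction.

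The one-action case exploits the definition $\lZ_2 = \lelapse([R](\lZ_1 \cap \lZ_g \cap \lZ_{sync}))$. For the first direction, I would take a local run $(q_1, \lv) \lact{b} (q_2, \lv')$ with $\lv \in \lZ_1$, unfold it as pre-delays, then a $b$-step, then post-delays (as per Definition \ref{def:local-run}), and track the valuation through each of the four zone operations: the pre-delays keep it in $\lZ_1$ (time-elapsed), the $b$-step's guard and synchronisation requirements place the intermediate valuation in $\lZ_1 \cap \lZ_g \cap \lZ_{sync}$, the reset puts it in $[R](\lZ_1 \cap \lZ_g \cap \lZ_{sync})$, and the post-delays land in $\lelapse(\cdot) = \lZ_2$.

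For the second direction, I would unpack $\lv' \in \lZ_2$ by reversing the construction: by definition of $\lelapse$ there is a valuation in $[R](\lZ_1 \cap \lZ_g \cap \lZ_{sync})$ together with local delays reaching $\lv'$; by definition of the reset there is a pre-image in $\lZ_1 \cap \lZ_g \cap \lZ_{sync}$; this pre-image can serve as the required $\lv \in \lZ_1$, and together with the extracted data assembles into a valid local $b$-run from $\lv$ to $\lv'$ (with empty pre-delays).

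The proof is largely routine; the main delicate point is bookkeeping the interspersed delay sequences $\D_i$ in a local run and ensuring that the intermediate zones stay time-elapsed so the induction can proceed. This relies on Lemma \ref{local-time-elapse}, which characterises $\lelapse$ precisely as the set of valuations obtainable through arbitrary per-process local delays, so that delay blocks can be correctly absorbed into (and extracted from) the zone construction at each step.
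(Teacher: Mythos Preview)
Your proposal is correct and follows essentially the same approach as the paper: induction on the length of $u$, with the single-action case handled by unfolding the zone-step definition and tracking the valuation through local elapse, guard/sync intersection, and reset (forward for the pre-property, backward for the post-property), noting that intermediate zones remain time-elapsed. Your write-up is in fact more explicit than the paper's own proof about the bookkeeping of the delay blocks and the role of $\lelapse$; the only cosmetic difference is that the paper treats the single action as the base case rather than the empty sequence.
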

 \begin{proof}
  This is proved by induction on the length of $u$.

  For the pre property. Suppose $u$ is a single action $a$.  Then
  $(q, \lv) \lact{a} (q', \lv')$ implies there is a sequence of local steps:
  $(q, \lv) \lact{\D} (q, \lv_1) \lact{a} (q', \lv_2) \lact{\D'} (q', \lv')$. Since $\lv \in \lZ$ and $\lZ$ is local time elapsed,
  valuation $\lv_1 \in \lZ$. By definition of $(q, \lZ) \lact{a} (q', \lZ')$ we get $\lv' \in \lZ'$. The induction step follows by a similar argument, and noting the fact that $\lZ'$
  is local-time-elapsed in $(q, \lZ) \lact{a} (q', \lZ')$.

  For the post property. When $u$ is a single action, the definition entails that there is a $\lv \in \lZ$ such that
  $(q, \lv) \lact{a} \lact{\D} (q', \lv')$. When $u = a_1 \dots a_n$, consider the sequence of zones
  $(q, \lZ) \lact{a_1} (q_1, \lZ_1) \cdots (q_{n-1}, \lZ_{n-1})
  \lact{a_n} (q', \lZ')$, use a similar argument to obtain a
  $\lv_{n-1} \in \lZ_{n-1}$ and then the induction hypothesis for the shorter sequence $a_1 \dots a_{n-1}$.
\end{proof}

\begin{definition}[Local zone graph]
  For a network of timed automata $\Nn$ the \emph{local zone graph} of
  $\Nn$, denoted $\lzg(\Nn)$, is a transition system whose nodes are
  of the form $(q,\lZ)$ where $\lZ$ is a time elapsed local zone, and
  whose transitions are steps $(q,\lZ)\lact{b}(q',\lZ')$.  The initial
  node $(q^\init, \lZ^\init)$ consists of the initial state $q_\init$
  of the network and the local zone
  $\lZ^\init = \lelapse(\{\lv^\init\})$.
\end{definition}

Directly from Lemma~\ref{lem:pre-post} we obtain the main property of
local zone graphs stated in~\cite{Bengtsson:CONCUR:1998}.  This allows
us to use local zone graphs for reachability testing.

\begin{theorem}\label{thm:offset-zone-graphs}
  For a given network of timed automata $\Nn$, there is a run of the
  network reaching a state $q$ iff for some non-empty local zone
  $\lZ$, node $(q,\lZ)$ is reachable in $\lzg(\Nn)$ from its initial
  node.
\end{theorem}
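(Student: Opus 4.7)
The plan is to chain two equivalences: first between global and local runs via Lemma~\ref{lem:eq-local-global}, then between local runs on valuations and local paths on zones via Lemma~\ref{lem:pre-post}. The forward direction moves a concrete global witness down to a zone witness, and the backward direction extracts a local (hence global) witness from a non-empty zone.

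For the forward direction, suppose $q$ is reachable by a run of $\Nn$, so $(q^\init, v^\init) \act{u} (q, v)$ for some sequence $u$ and some valuation $v$. By Lemma~\ref{lem:run-global-local} there is a local run $(q^\init, \lval(v^\init)) \lact{u} (q, \lval(v))$. Noting that $\lv^\init = \lval(v^\init) \in \lZ^\init = \lelapse(\{\lv^\init\})$ and that $\lZ^\init$ is by construction time-elapsed, the pre-property of Lemma~\ref{lem:pre-post} yields $(q^\init, \lZ^\init) \lact{u} (q, \lZ)$ with $\lval(v) \in \lZ$, so $\lZ$ is non-empty.

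For the backward direction, suppose $(q^\init, \lZ^\init) \lact{u} (q, \lZ)$ in $\lzg(\Nn)$ with $\lZ$ non-empty. Pick any $\lv' \in \lZ$. The post-property of Lemma~\ref{lem:pre-post} gives $\lv_0 \in \lZ^\init$ with $(q^\init, \lv_0) \lact{u} (q, \lv')$. Since $\lZ^\init = \lelapse(\{\lv^\init\})$, the valuation $\lv_0$ is obtained from $\lv^\init$ by a sequence of local delays, which we may prepend to the run to obtain $(q^\init, \lv^\init) \lact{u} (q, \lv')$. Finally, appending a suitable sequence of local delays (one per process) extends this to $(q^\init, \lv^\init) \lact{u} (q, \lv'')$ where $\lv''$ is synchronized; such an extension exists because local-time elapse is always enabled and the initial $\lv^\init$ is itself synchronized. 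Lemma~\ref{lem:eq-local-global} then produces a global run $(q^\init, \gval(\lv^\init)) \act{w} (q, \gval(\lv''))$ for some $w \sim u$, and since $\gval(\lv^\init) = v^\init$, state $q$ is reachable in $\Nn$.

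The only subtlety is the backward direction: one must ensure that the local run extracted from $\lZ$ really starts at the designated initial local valuation $\lv^\init$, and ends at a synchronized valuation so that Lemma~\ref{lem:eq-local-global} applies. Both are handled by inserting local delays at the endpoints, which is harmless since $\lZ^\init$ is time-elapsed and local delays are always enabled. Everything else is routine bookkeeping on the already-established lemmas.
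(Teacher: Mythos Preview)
Your proof is correct and follows the same approach the paper intends: the paper's own ``proof'' is the single sentence preceding the theorem, stating that the result follows directly from Lemma~\ref{lem:pre-post}. You have spelled out the details the paper leaves implicit, in particular the appeal to Lemma~\ref{lem:eq-local-global} (and its constituent Lemma~\ref{lem:run-global-local}) to pass between global and local runs, and the endpoint adjustments by local delays so that the hypotheses of those lemmas are met; all of this is straightforward and matches the paper's intent.
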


Notice that $\lzg(\Nn)$ may still be infinite and it cannot be used
directly for reachability checking. The solution in
Remark~\ref{remark:zones} does not apply to local zones due to the
multiple reference clocks. This problem will be addressed in
Section~\ref{sec:making-offset-zone}. We first focus on important
properties of the local-time zone graph w.r.t. concurrency.



\section{Why are local zone graphs better than global zone graphs?}
\label{sec:maler-zone}

The important feature about local zone graphs, as noticed
in~\cite{Bengtsson:CONCUR:1998}, is that two transitions on actions
with disjoint domains commute (see
Figure~\ref{fig:local-zones-commute}).

\begin{lemma}[Commutativity on local
  zones~\cite{Bengtsson:CONCUR:1998}]\label{lem:zone-independence}
  Suppose $\dom(a)\cap\dom(b)=\es$. If $(q,\lZ)\lact{ab} (q',\lZ')$
  then $(q,\lZ)\lact{ba}(q',\lZ')$.
\end{lemma}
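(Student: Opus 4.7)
The plan is to lift the pointwise independence result of Lemma~\ref{lem:independence} to local zones using the pre/post characterization of zone transitions from Lemma~\ref{lem:pre-post}. Since every local zone appearing in $\lzg(\Nn)$ is time-elapsed, both halves of Lemma~\ref{lem:pre-post} apply, and combined they yield, for any zone transition $(q,\lZ)\lact{u}(q',\lZ_u)$, the set-theoretic characterization
\[
\lZ_u \;=\; \bigl\{\, \lv' \,:\, \ex\, \lv \in \lZ,\ (q,\lv) \lact{u} (q',\lv') \,\bigr\}.
\]
Specializing to $u = ab$ describes exactly the given $\lZ'$.

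Next I would apply Lemma~\ref{lem:independence} pointwise. For each $\lv \in \lZ$ and each target $\lv'$, the condition $(q,\lv) \lact{ab} (q',\lv')$ is equivalent to $(q,\lv) \lact{ba} (q',\lv')$: the forward direction is the lemma itself, and the reverse follows from the same lemma after swapping the roles of $a$ and $b$ (which is legal because the hypothesis $\dom(a)\cap\dom(b)=\es$ is symmetric in $a,b$, and the discrete successor state $q'$ agrees in both orders since $a$ and $b$ affect disjoint components of the global state). Hence the right-hand side of the characterization above is unchanged when $ab$ is replaced by $ba$.

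To conclude, I pick any $\lv' \in \lZ'$—non-empty by hypothesis—and a witnessing $\lv \in \lZ$ supplied by the post property. The pre property then delivers a zone transition $(q,\lZ) \lact{ba} (q',\lZ'')$ with $\lv' \in \lZ''$, and the same set-theoretic characterization applied to $u = ba$ forces $\lZ'' = \lZ'$. The main obstacle—if one can call it that—is purely bookkeeping: verifying that the discrete successor state under $ba$ matches the one under $ab$, and that the symmetric half of Lemma~\ref{lem:independence} is indeed available. No new reasoning about constraints or DBMs is required; the proof is a clean lift of the valuation-level independence.
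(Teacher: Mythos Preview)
Your proposal is correct and follows essentially the same route as the paper: use the post property of Lemma~\ref{lem:pre-post} to descend to a run on valuations, apply Lemma~\ref{lem:independence} to commute, and use the pre property to climb back up to zones. The paper packages the two inclusions as ``one inclusion plus symmetry'' while you package them as a single set-theoretic characterization of $\lZ_u$, but the underlying argument is identical.
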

\begin{proof}
  Suppose $(q,\lZ)\lact{ab} (q',\lZ'_{ab})$.  We will show that there is $(q,\lZ)\lact{ba} (q',\lZ'_{ba})$, and that
  $\lZ'_{ab}\incl\lZ'_{ba}$.  By symmetry this will show the lemma.
  
  Take $\lv'\in \lZ'_{ab}$.  Using the backward (post) property of steps on zones (Lemma~\ref{lem:pre-post}) we get a run:
  \begin{equation*}
    (q, \lv)\lact{a}(q_a, \lv_a)\lact{\D_a}(q_a, \lv'_a)\lact{b}
    (q', \lv_b)\lact{\D_b}(q', \lv') ,
  \end{equation*}
  where $\lv\in \lZ$.  Using commutation on the level of runs,
  Lemma~\ref{lem:independence}, we get a run
  \begin{equation*}
    (q, \lv)\lact{\D'_b}(q, \lv'_b)\lact{b}(q_b, \lv''_b)\lact{\D'_a}
    (q_b, \lv''_a)\lact{a}(q', \lv'''_a)\lact{\D_a''}(q', \lv') 
  \end{equation*}
  Now using the forward (pre) property of steps on zones from
  Lemma~\ref{lem:pre-post} we obtain that $\lZ_{ba}$ exists and
  $\lv'\in \lZ_{ba}$.
\end{proof}

From the above lemma, we get the following property.

\begin{corollary}\label{cor:local-zone-commutativity}
  If $(q, \lZ) \lact{u} (q', \lZ')$ and $u \sim w$, then
  $(q, \lZ) \lact{w} (q', \lZ')$
\end{corollary}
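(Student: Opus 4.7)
The plan is to proceed by induction on the number of adjacent independent-action transpositions needed to transform $u$ into $w$. Recall that the equivalence $\sim$ is defined as the reflexive-transitive closure of the relation that swaps two adjacent actions whose process domains are disjoint, so such a finite number always exists.

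For the base case (zero swaps), we have $u = w$ and the claim is immediate. For the inductive step, it suffices to prove the one-swap case: suppose $u = u_1 \cdot ab \cdot u_2$ and $w = u_1 \cdot ba \cdot u_2$ with $\dom(a) \cap \dom(b) = \emptyset$. The run $(q,\lZ)\lact{u}(q',\lZ')$ decomposes (by the definition of $\lact{}$ extended to sequences) as
\begin{equation*}
  (q,\lZ) \lact{u_1} (q_1,\lZ_1) \lact{a} (q_2,\lZ_2) \lact{b} (q_3,\lZ_3) \lact{u_2} (q',\lZ') .
\end{equation*}
Applying Lemma~\ref{lem:zone-independence} to the middle portion $(q_1,\lZ_1) \lact{ab} (q_3,\lZ_3)$ yields $(q_1,\lZ_1) \lact{ba} (q_3,\lZ_3)$, which passes through some intermediate node $(q_2'',\lZ_2'')$. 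Recomposing with the prefix $u_1$ and suffix $u_2$ gives exactly $(q,\lZ)\lact{w}(q',\lZ')$, as required. The general case follows by chaining such single-swap applications along any sequence of adjacent transpositions witnessing $u \sim w$.

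The only potentially delicate point is checking that $\lact{}$ composes cleanly through the swap, i.e. that replacing an intermediate segment by an equivalent one preserves the overall zone run. This is guaranteed because Lemma~\ref{lem:zone-independence} delivers not just the existence of a swapped run but equality of both the target state and the target local zone, so the suffix $u_2$ can be executed unchanged from $(q_3,\lZ_3)$. Thus no real obstacle arises; the corollary is essentially a syntactic iteration of Lemma~\ref{lem:zone-independence}.
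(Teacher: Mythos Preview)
Your proof is correct and is precisely the natural elaboration of the paper's own argument, which simply states that the corollary follows from Lemma~\ref{lem:zone-independence}. The induction on the number of adjacent transpositions, using Lemma~\ref{lem:zone-independence} for the single-swap step, is exactly the intended reasoning.
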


Thus starting from a local zone, all equivalent interleavings of a
sequence of actions $u$ end up in the same local zone. This is in
stark contrast to the global zone graph, where each interleaving
results in a possibly different global zone. Let
\begin{equation*}
MZ(q,Z,u)=\set{v' \mid \exists v\in Z, \ \exists w, \ w\sim u\
  \text{ and } (q,v)\act{w}(q',v')} 
\end{equation*}
denote the union of all these global zones.
Salah et al.~\cite{Salah2006OnII}\ have shown that, surprisingly,
$MZ(q,Z,u)$ is always a global zone. We call it \emph{aggregated
  zone}, and the notation $MZ$ is in the memory of Oded Maler. In the
same work, this observation was extended to an algorithm for
\emph{acyclic timed automata} that from time to time merged zones
reached by equivalent paths to a single global zone. 
We prove below that this aggregated zone can, in fact, be obtained
directly in the local zone graph: the aggregated (global) zone is
exactly the set of synchronized valuations obtained after executing
$u$ in the local zone semantics. Here we need some notation: let $Z$ be
a global zone and $\lZ$ a local zone; define
$\sync(\lZ) = \{ \lv \in \lZ \mid \lv \text{ is synchronized} \}$;
$\lval(Z) = \{\lval(v)\mid v \in Z\}$ and
$\gval(\sync(\lZ)) = \{ \gval(\lv) \mid \lv \in \sync(\lZ)
\}$. 

\begin{lemma}
  \label{lemma:sync_global_local_yields_zones}
  For every global zone $Z$ and local zone $\lZ$: $\sync(\lZ)$ and
  $\lval(Z)$ are local zones and $\gval(\sync(\lZ))$ is a global zone.
\end{lemma}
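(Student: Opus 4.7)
The plan is to exhibit each of the three sets as a conjunction of difference constraints of the form $y_1 - y_2 \lleq c$ over the appropriate set of variables, using the fact that both global and local zones are, by definition, sets defined by such conjunctions and are therefore closed under further intersection with such constraints and under projection / variable elimination (both standard DBM operations, also used implicitly in the closure proofs earlier in the section).

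For $\sync(\lZ)$, let $\lZ_{sync}$ be the local zone defined by the constraints $t_{p_1} - t_{p_2} \le 0$ ranging over all ordered pairs of processes (this is the same auxiliary zone already used in the definition of steps $(q,\lZ)\lact{b}(q',\lZ')$). Then $\sync(\lZ) = \lZ \cap \lZ_{sync}$: a local valuation is synchronized iff it satisfies all of these constraints. Hence $\sync(\lZ)$ is a local zone.

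For $\lval(Z)$, recall that $\lval(v)(\wx)=v(\wx)$ and $\lval(v)(t_p)=v(t)$ for every process $p$. I would build a candidate local zone $\lZ^*$ by taking the defining constraints of $Z$ over $\wX\cup\{t\}$, replacing every occurrence of $t$ by $t_1$, and conjoining with $\lZ_{sync}$. To verify $\lval(Z)=\lZ^*$: for the $\supseteq$ direction, any $\lv\in\lZ^*$ is synchronized (by $\lZ_{sync}$), so setting $v(\wx)=\lv(\wx)$ and $v(t)=\lv(t_1)=\cdots=\lv(t_k)$ yields a global valuation that satisfies all constraints of $Z$ by construction, giving $\lv=\lval(v)\in\lval(Z)$; for $\subseteq$, every $\lval(v)$ with $v\in Z$ is synchronized and, by the same substitution, satisfies every constraint in $\lZ^*$. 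So $\lval(Z)$ is a local zone.

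For $\gval(\sync(\lZ))$, observe that on $\sync(\lZ)$ the map $\lv\mapsto \gval(\lv)$ coincides with the projection/renaming that drops $t_2,\dots,t_k$ and renames $t_1$ to $t$, because all $t_p$ take the same value on a synchronized valuation. Applying this substitution to the (canonical) system of difference constraints defining $\sync(\lZ)$ produces a system of difference constraints over $\wX\cup\{t\}$: constraints between two reference clocks become trivial ($0\le 0$), while all remaining constraints are of the required global zone form. The resulting set of valuations is exactly $\gval(\sync(\lZ))$, so it is a global zone.

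The main obstacle is the bookkeeping in the second claim, namely verifying $\lval(Z)=\lZ^*$ together with checking that $\lZ^*$ is non-empty exactly when $Z$ is; the other two claims follow directly from closure of zones under intersection with difference constraints and under projection.
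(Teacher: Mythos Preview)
Your proposal is correct and follows essentially the same approach as the paper: intersect with the equalities $t_i=t_j$ for $\sync(\lZ)$, substitute $t\mapsto t_1$ and add those equalities for $\lval(Z)$, and substitute each $t_i\mapsto t$ for $\gval(\sync(\lZ))$. The only cosmetic difference is that for the third claim the paper substitutes directly in the constraints of $\lZ$ (which already forces all $t_i$ to coincide), whereas you first pass to $\sync(\lZ)$ in canonical form and then project; both yield the same constraint system, and your remark about non-emptiness is not needed for the lemma as stated.
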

\begin{proof}
  $\sync(\lZ)$ is the local zone $\lZ \land \Land_{i,j} (t_i = t_j)$;
  $\lval(Z)$ is the local zone obtained by replacing $t$ with some
  $t_i$ in each constraint, and adding the constraints
  $\Land_{i,j} (t_i = t_j)$; $\gval(\sync(\lZ))$ is obtained by
  replacing each $t_i$ with $t$ in each constraint of $\lZ$.
\end{proof}

\begin{theorem}
  \label{thm:local_zones_are_aggregated_zones}
  Consider a state $q$, a sequence of actions $u$ and a time elapsed
  global zone $Z$. Consider the local zone $\lZ =
  \lelapse(\lval(Z))$. If $(q, \lZ) \lact{u} (q', \lZ')$, we have
  $MZ(q,Z,u)=\gval(\sync(\lZ'))$, otherwise $MZ(q,Z,u)=\es$.
\end{theorem}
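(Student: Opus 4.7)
The plan is to show the two set equalities by handling each direction as a translation between a global run and its local counterpart, leaning on the three main tools already assembled: Lemma \ref{lem:eq-local-global} (equivalence of global and local runs between synchronized valuations), Lemma \ref{lem:pre-post} (pre/post properties of runs on local zones), and Corollary \ref{cor:local-zone-commutativity} (all $\sim$-equivalent interleavings of $u$ land in the same local zone). I will handle the case when $(q,\lZ) \lact{u} (q',\lZ')$ exists, and treat the non-existence case at the end by contradiction.

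\textbf{Inclusion $MZ(q,Z,u) \subseteq \gval(\sync(\lZ'))$.} Take $v' \in MZ(q,Z,u)$, witnessed by $v \in Z$ and $w \sim u$ with $(q,v) \act{w} (q',v')$. Convert this to a local run $(q,\lval(v)) \lact{w} (q',\lval(v'))$ via Lemma \ref{lem:run-global-local}. Since $v \in Z$ and $Z$ is time-elapsed, $\lval(v) \in \lval(Z) \subseteq \lelapse(\lval(Z)) = \lZ$, and $\lZ$ is time-elapsed. Apply the pre property of Lemma \ref{lem:pre-post} to obtain $(q,\lZ) \lact{w} (q',\lZ'_w)$ with $\lval(v') \in \lZ'_w$. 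By Corollary \ref{cor:local-zone-commutativity}, $\lZ'_w = \lZ'$. Finally, $\lval(v')$ is synchronized by construction, so $v' = \gval(\lval(v')) \in \gval(\sync(\lZ'))$.

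\textbf{Inclusion $\gval(\sync(\lZ')) \subseteq MZ(q,Z,u)$.} Take $v' = \gval(\lv')$ with $\lv' \in \sync(\lZ')$. By the post property of Lemma \ref{lem:pre-post}, there is a local run $(q,\lv) \lact{u} (q',\lv')$ with $\lv \in \lZ = \lelapse(\lval(Z))$. The main obstacle is that $\lv$ need not be synchronized, so I cannot directly apply Lemma \ref{lem:eq-local-global} to convert to a global run starting in $Z$. The fix is to rewind: $\lv \in \lelapse(\lval(Z))$ means $\lv = \lv_0 +_1 \d_1 +_2 \cdots +_k \d_k$ for some $\lv_0 \in \lval(Z)$ (in particular, $\lv_0$ is synchronized and equals $\lval(v_0)$ for some $v_0 \in Z$) and local delays $\d_i \in \Rpos$. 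Prepending this sequence of local delays to the run and absorbing it into the initial delay block of the local run on $u$, I obtain a local run $(q,\lv_0) \lact{u} (q',\lv')$ whose endpoints are \emph{both} synchronized. Lemma \ref{lem:eq-local-global} then yields a global run $(q,\gval(\lv_0)) \act{w} (q',\gval(\lv'))$ for some $w \sim u$. Since $\gval(\lv_0) = v_0 \in Z$ and $\gval(\lv') = v'$, this shows $v' \in MZ(q,Z,u)$.

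\textbf{Non-existence case.} Suppose $(q,\lZ) \lact{u}$ is not defined yet $v' \in MZ(q,Z,u)$; the same argument as in the first inclusion produces $(q,\lZ) \lact{w} (q',\lZ'_w)$ for some $w \sim u$, and then Corollary \ref{cor:local-zone-commutativity} yields $(q,\lZ) \lact{u} (q',\lZ'_w)$, a contradiction. Hence $MZ(q,Z,u) = \emptyset$ in that case. The subtle point throughout is the rewind step in the second inclusion; once one observes that local delays commute freely past the initial delay block of the $u$-run (so the enlarged delay sequence is still a valid opening of a local run on $u$), the rest is a routine assembly of the cited lemmas.
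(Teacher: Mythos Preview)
Your proof is correct and follows essentially the same route as the paper's, using Lemmas~\ref{lem:pre-post} and~\ref{lem:eq-local-global} together with Corollary~\ref{cor:local-zone-commutativity} in the same way for both inclusions. You are in fact slightly more careful than the paper: you spell out the rewind step in the second inclusion (the paper silently passes from $\lv_u\in\lZ$ to a synchronized $\lv\in\lval(Z)$ without noting that the intervening local delays are absorbed into the opening delay block), and you handle the non-existence case explicitly, which the paper's proof omits.
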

\begin{proof}
  Pick $v' \in MZ(q,Z, u)$. There exists $w \sim u$, $v \in Z$ and a
  global run $(q, v) \act{w} (q', v')$. From
  Lemma~\ref{lem:eq-local-global}, there exists a local run
  $(q, \lval(v)) \lact{w} (q', \lval(v'))$. By assumption,
  $\lval(v) \in \lZ$. Hence from the pre property of local zones
  (Lemma~\ref{lem:pre-post}), there exists
  $(q, \lZ) \lact{w} (q', \lZ_w)$ such that $\lval(v') \in \lZ_w$. As
  $\lval(v')$ is synchronized, we get $\lval(v') \in
  \sync(\lZ_w)$. But, by Corollary~\ref{cor:local-zone-commutativity},
  $\lZ_w = \lZ'$. This proves $\lval(v') \in \sync(\lZ')$ and hence
  $v' \in \gval(\sync(\lZ'))$.

  For the other direction take $v'\in \gval(\sync(\lZ'))$. As $(q,
  \lZ) \lact{u} (q', \lZ')$, by
  post property of local zones (Lemma~\ref{lem:pre-post}) there is a local run $(q,
  \lv_u)\lact{u}(q',\lval(v'))$ for some $\lv_u \in \lZ$.  Since
  $\lv_u \in \lZ$, it is obtained by a local time elapse from some
  $\lv \in \lval(Z)$. Hence $\lv$ is
  synchronized and $\gval(\lv) \in Z$. From Lemma~\ref{lem:eq-local-global} we get that for some $w\sim u$ there
  is a global run $(q,\gval(\lv)) \act{w} (q',v')$. Hence $v' \in
  MZ(q, Z, u)$. 
\end{proof}

Theorem~\ref{thm:local_zones_are_aggregated_zones} gives an efficient
way to compute aggregated zones: it is sufficient to compute local zone
graphs. Computing local zone graphs is not more difficult than
computing global zone graphs. But, surprisingly, the combinatorial
explosion due to interleaving does not occur in local zone graphs,
thanks to the theorem above. Hence, this gives an incentive to work
with local zone graphs instead of global zone graphs.

This contrasts with the aggregation algorithm in~\cite{Salah2006OnII}
which requires to store all the paths to a global zone and detect
situations where zones can be merged, that is, when all the equivalent
permutations have been visited. Another important
limitation of the algorithm in~\cite{Salah2006OnII} is that it can
only be applied to acyclic zone graphs. If local zone graphs can be
computed for general timed automata (which contain cycles), we can get
to use the aggregation feature for all networks (and not only acyclic
ones). To do this, there is still a major problem left: local zone
graphs could be infinite when the automata contain cycles.


\section{Making local zone graphs finite}
\label{sec:making-offset-zone}

The standard approach to make a global zone graph finite is to use a
subsumption operation between global
zones~\cite{BBLP-STTT05,Herbreteau}. Such a subsumption operation is
usually based on a finite index simulation between (global) valuations
parameterized by certain maximum constants occurring in guards. We
first discuss technical problems that arise when we lift these
simulations to local valuations.
In the paper introducing local time semantics and local zone
graphs~\cite{Bengtsson:CONCUR:1998} a notion of a \emph{catch-up
  equivalence} between local valuations is defined. This equivalence
is a finite index simulation.  So one could, in theory, construct a
finite local zone graph using catch-up subsumption as a finite
abstraction.  Unfortunately, the question of effective algorithms for
catch-up subsumption was left open in~\cite{Bengtsson:CONCUR:1998},
and, to the best of our knowledge, there is no efficient procedure for
catch-up subsumption.

Another finite abstraction of the local zone graph was proposed by
Minea~\cite{Minea,Minea:thesis}.  We however believe that Minea's
approach carries a flaw, and a different idea is needed to get
finiteness.  Minea's approach is founded on an equivalence between
local valuations in the lines of the region
equivalence~\cite{Alur:TCS:1994}. Let $\cmax$ be the maximum constant
appearing among the guards of a timed automata network. Two local
valuations $\lv_1$ and $\lv_2$ are said to be equivalent, written as
$\lv_1 \mineareg \lv_2$ if for all variables
$\a, \b \in \wX \cup \{t_1,\dots,t_k\}$ (note that all reference
clocks are included):
\begin{itemize}
\item either $\lfloor \lv_1(\a) - \lv_1(\b) \rfloor$ $=$
  $\lfloor \lv_2(\a) - \lv_2(\b) \rfloor$,
\item or $\lfloor \lv_1(\a) - \lv_1(\b) \rfloor > \cmax$ and
  $\lfloor \lv_2(\a) - \lv_2(\b) \rfloor > \cmax$,
\item or $\lfloor \lv_1(\a) - \lv_1(\b) \rfloor < -\cmax$ and
  $\lfloor \lv_2(\a) - \lv_2(\b) \rfloor < -\cmax$.
\end{itemize}

It is claimed (in Proposition 6 of \cite{Minea}) that this equivalence
is preserved over local time elapse: for every process $p$, and for
$\d \ge 0$ there exists $\d' \ge 0$ such that
$\lv_1 +_p \d \mineareg \lv_2 +_p \d'$.  However, this is not true, as
we now exhibit a counter-example. Consider $2$ processes with clocks
$X_1 = \{x\}$, $X_2 = \{y\}$. This gives $\wX = \{ \wx, \wy \}$ and
$T = \{t_1, t_2\}$. Let $\cmax = 3$. Define local valuations $\lv_1:
\wx = 0, t_1 = 0, \wy = 0, t_2 = 4$ and $\lv_2: \wx = 0, t_1 = 0, \wy
= 0, t_2 = 5$.

Note that the differences in $\lv_1$ are either $0, 4$ or $-4$ and the
corresponding differences in $\lv_2$ are $0, 5$ or $-5$. Hence by
definition, $\lv_1 \mineareg \lv_2$. Consider valuation $\lv_1 +_1 2$
obtained from $\lv_1$ by local delay of $2$ units in component $1$,
that is
  $\lv_1 +_1 2: \wx = 0, t_1 = 2, \wy = 0, t_2 = 4$
Observe that in $\lv_1 +_1 2$, the difference $t_1 - \wx = 2$ and
$t_2 - t_1 = 2$ which are both smaller than $\cmax$. We claim there is
no local delay $\d'$ such that $\lv_1 +_1 2 \mineareg \lv_2 +_1
\d'$. Valuation $\lv_2 +_1 \d'$ is given by
$\wx = 0, t_1 = \d', \wy = 0, t_2 = 5$. If
$\lv_1 +_1 2 \mineareg \lv_2 +_1 \d'$, we need $\d' = 2$ (due to
difference $t_1 - \wx$) and $5 - \d' = 2$ (due to difference
$t_2 - t_1$). This is not possible.

The main problem is that the above equivalence ``forgets'' actual
values when the difference between reference clocks is above
$c_{max}$. Even if this difference is bigger than the maximum
constant, local delays can bring them within the constant
$c_{max}$. Such a situation does not arise in the global semantics, as
there is a single reference clock.

In \cite{Minea} a widening operator on local zones based on
$\mineareg$ is used for finiteness: given a canonical representation
of a zone $\lZ$, the \emph{maximized zone} with respect to $\cmax$ is
obtained by changing every constraint $y_1 - y_2 \lleq c$ to
$y_1 - y_2 < \infty$ if $c > \cmax$, and to $y_1 - y_2 < - \cmax$ if
$c < -\cmax$. In the local zone graph construction, each local zone is
maximized and inclusion between maximized zones is used for
termination. Figure~\ref{fig:minea-bug} gives an example of a network
$\langle A_1, A_2 \rangle$ where the maximized local zone graph is
unsound. This is shown by making use of the valuations $\lv_1$ and
$\lv_2$ above. We also add an extra clock  $z$ in
component $A_2$ for convenience. 
Note that $\cmax = 3$. Although clock $y$ does not
appear in $A_2$, one can assume that there are other transitions from
$q_0$ that deal with $y$  (we avoid illustrating these
transitions explicitly).
In the discussion below, $\lv_1, \lv_2$ are valuations
restricted to $\wx, \wy, t_1$ and $t_2$.  In order to reach the state
$p_2$, the synchronization action $c$ needs to be taken: transition
sequence $a_1 c$ requires $c$ to be taken at global time $4$, and
transition sequence $b_1 b_2 c$ requires $c$ at global time $5$. Hence
$c$ is not enabled in the network. This is witnessed by $c$ not being
enabled in the local zone graph (middle picture). Valuation $\lv_2$ is
present in the zone reached after $b_1 b_2$. The maximized local zone
graph is shown on the right.  Zones where maximization makes a
difference are shaded gray. In particular, the zone $b_1 b_2$ on
maximization adds valuation $\lv_1$, from which $a_1 c$ is enabled,
giving a zone in the maximized local zone graph with state $p_2$.

\begin{figure}[t]
  \centering
  \begin{tikzpicture}[state/.style={draw, thick, circle, inner
      sep=2pt}]
    \begin{scope}[every node/.style={state}]
      \node (p0) at (0, 10) {\scriptsize $p_0$};
      \node (p1) at (0, 8.5) {\scriptsize $p_1$};
      \node (p2) at (0, 7) {\scriptsize $p_2$};
      
      \node (q0) at (1, 10) {\scriptsize $q_0$};
      \node (q1) at (1, 8.5) {\scriptsize $q_1$};
      \node (q2) at (1, 7) {\scriptsize $q_2$};
      \node (q3) at (1, 5.5) {\scriptsize $q_3$};
    \end{scope}
    
    \begin{scope}[->, >=stealth, thick]
      \draw (0, 10.8) to (p0);
      \draw (p0) to (p1);
      \draw (p1) to (p2);
      \draw (1, 10.8) to (q0);
      \draw (q0) to (q1);
      \draw (q1) to (q2);
      \draw (q2) to (q3);
    \end{scope}
    
    \node at (0.2, 9.25) {\footnotesize $a_1$};
    \node at (0.2, 7.75) {\footnotesize $c$};
    \node at (0.8, 9.25) {\footnotesize $b_1$};
    \node at (0.8, 7.75) {\footnotesize $b_2$};
    \node at (0.8, 6.25) {\footnotesize $c$};
    
    \node [left] at (0, 9.35) {\footnotesize $x = 2$};
    \node [left] at (-0.1, 9) {\footnotesize $\{x\}$};
    \node [left] at (0, 7.75) {\footnotesize $x = 2$};
    \node [right] at (1, 9.35) {\footnotesize $z = 2$};
    \node [right] at (1.1, 9) {\footnotesize $\{z\}$};
    \node [right] at (1, 7.85) {\footnotesize $z = 3$};
    \node [right] at (1.1, 7.5) {\footnotesize $\{z \}$};
    
    \draw [gray, thick] (1.9, 10.8) -- (1.9, 5);
    \draw [gray, thick] (6.9, 10.8) -- (6.9, 5);

    \begin{scope}[xshift=2.7cm, yshift=-0.3cm,
      zone/.style={draw,rectangle,rounded corners,inner
        sep=1pt},font=\scriptsize]

      \node [zone] (z0) at (2, 10.5)
      {$\begin{array}{l}
          \wx=\wy=\wz\\
          t_1\ge \wx,\, t_2\ge \wz
        \end{array}$};
      \node [zone] (z1) at (0.5, 9)
      {$\begin{array}{l}
          \wy=\wz\\
          \wx-\wy=2\\
          t_1\ge \wx,\, t_2\ge \wz
        \end{array}$};
      \node [zone] (z2) at (3, 9)
      {$\begin{array}{l}
          \wx=\wy\\
          \wz-\wy=2\\
          t_1\ge \wx,\, t_2\ge \wz
        \end{array}$};
      \node [zone] (z3) at (0.5, 7.5)
      {$\begin{array}{l}
          \wx-\wy=2\\
          \wz-\wy=2\\
          t_1\ge \wx,\, t_2\ge \wz
        \end{array}$};
      \node [zone] (z4) at (3, 7.5)
      {$\begin{array}{l}
          \wx=\wy\\
          \wz-\wy=5\\
          t_1\ge \wx,\, t_2\ge \wz
        \end{array}$};
      \node [zone] (z5) at (0.5, 6)
      {$\begin{array}{l}
          \wx-\wy=2\\
          \wz-\wy=5\\
          t_1\ge \wx,\, t_2\ge \wz
        \end{array}$};

      \node at (2, 11) {$\langle p_0, q_0 \rangle$};
      \node at (0, 9.7) {$\langle p_1, q_0 \rangle$};
      \node at (3.5, 9.7) {$\langle p_0, q_1 \rangle$};
      \node at (0, 8.2) {$\langle p_1, q_1 \rangle$};
      \node at (3.5, 8.2) {$\langle p_0, q_2 \rangle$};
      \node at (0, 6.7) {$\langle p_1, q_2 \rangle$};
      
      \begin{scope}[->, >=stealth, gray, thick]
        \draw (z0) to node [midway, black, fill=white, inner sep=0.5pt]
        {\scriptsize $a_1$} (z1);
        \draw (z0) to node [midway, black, fill=white, inner
        sep=0.5pt]
        {\scriptsize $b_1$} (z2);
        \draw (1.3, 8.5) to node [midway, black, fill=white, inner
        sep=0.5pt]
        {\scriptsize $b_1$} (1.3,8);
        \draw (z2) to node [midway, black, fill=white, inner
        sep=0.5pt]
        {\scriptsize $a_1$} (z3);
        \draw (2.7, 8.5) to node [midway, black, fill=white, inner
        sep=0.5pt]
        {\scriptsize $b_2$} (2.7, 8);
        \draw (1.3, 7) to node [midway, black, fill=white, inner
        sep=0.5pt]
        {\scriptsize $b_2$} (1.3, 6.5);
        \draw (z4) to node [midway, black, fill=white, inner
        sep=0.5pt]
        {\scriptsize $a_1$} (z5);
      \end{scope}
    \end{scope}

    \begin{scope}[xshift=7.7cm, yshift=-0.3cm,
      zone/.style={draw,rectangle, rounded corners,inner
        sep=1pt},font=\scriptsize]

      \node [zone] (z0) at (2, 10.5)
      {$\begin{array}{l}
          \wx=\wy=\wz\\
          t_1\ge \wx,\, t_2\ge \wz
        \end{array}$};
      \node [zone] (z1) at (0.5, 9)
      {$\begin{array}{l}
          \wy=\wz\\
          \wx-\wy=2\\
          t_1\ge \wx,\, t_2\ge \wz
        \end{array}$};
      \node [zone] (z2) at (3, 9)
      {$\begin{array}{l}
          \wx=\wy\\
          \wz-\wy=2\\
          t_1\ge \wx,\, t_2\ge \wz
        \end{array}$};
      \node [zone] (z3) at (0.5, 7.5)
      {$\begin{array}{l}
          \wx-\wy=2\\
          \wz-\wy=2\\
          t_1\ge \wx,\, t_2\ge \wz
        \end{array}$};
      \node [zone,fill=gray!20] (z4) at (3, 7.5)
      {$\begin{array}{l}
          \wx=\wy\\
          \wz-\wy=3\\
          t_1\ge \wx,\, t_2\ge \wz
        \end{array}$};
      \node [zone,fill=gray!20] (z5) at (0.5, 6)
      {$\begin{array}{l}
          \wx-\wy=2\\
          \wz-\wy=3\\
          t_1\ge \wx,\, t_2\ge \wz
        \end{array}$};
      \node [zone,fill=green!20] (z6) at (3.5,6)
      {$\begin{array}{l}
          \wx-\wy=2, \, \wz-\wy=3\\
          t_1-\wx \ge 2\\
          t_2-\wy \ge 3, \, t_2 \ge \wz
        \end{array}$};

      \node at (2, 11) {$\langle p_0, q_0 \rangle$};
      \node at (0, 9.7) {$\langle p_1, q_0 \rangle$};
      \node at (3.5, 9.7) {$\langle p_0, q_1 \rangle$};
      \node at (0, 8.2) {$\langle p_1, q_1 \rangle$};
      \node at (3.5, 8.2) {$\langle p_0, q_2 \rangle$};
      \node at (0, 6.7) {$\langle p_1, q_2 \rangle$};
      \node at (4, 6.7) {$\langle p_2, q_3 \rangle$};
      
      \begin{scope}[->, >=stealth, gray, thick]
        \draw (z0) to node [midway, black, fill=white, inner sep=0.5pt]
        {\scriptsize $a_1$} (z1);
        \draw (z0) to node [midway, black, fill=white, inner
        sep=0.5pt]
        {\scriptsize $b_1$} (z2);
        \draw (1.3, 8.5) to node [midway, black, fill=white, inner
        sep=0.5pt]
        {\scriptsize $b_1$} (1.3,8);
        \draw (z2) to node [midway, black, fill=white, inner
        sep=0.5pt]
        {\scriptsize $a_1$} (z3);
        \draw (2.7, 8.5) to node [midway, black, fill=white, inner
        sep=0.5pt]
        {\scriptsize $b_2$} (2.7, 8);
        \draw (1.3, 7) to node [midway, black, fill=white, inner
        sep=0.5pt]
        {\scriptsize $b_2$} (1.3, 6.5);
        \draw (z4) to node [midway, black, fill=white, inner
        sep=0.5pt]
        {\scriptsize $a_1$} (z5);
        \draw (z5) to node [midway, black, fill = white, inner
        sep=0.5pt]
        {\scriptsize $c$} (z6);
      \end{scope}
    \end{scope}    
  \end{tikzpicture}
  
  \caption{\emph{Left:} network $\langle A_1, A_2 \rangle$;
    \emph{Middle:} local zone graph; \emph{Right:} maximized local
    zone graph of~\cite{Minea}. State $(p_2, q_3)$ is not reachable in
    local zone graph, but becomes reachable after maximization.}
  \label{fig:minea-bug}
\end{figure}
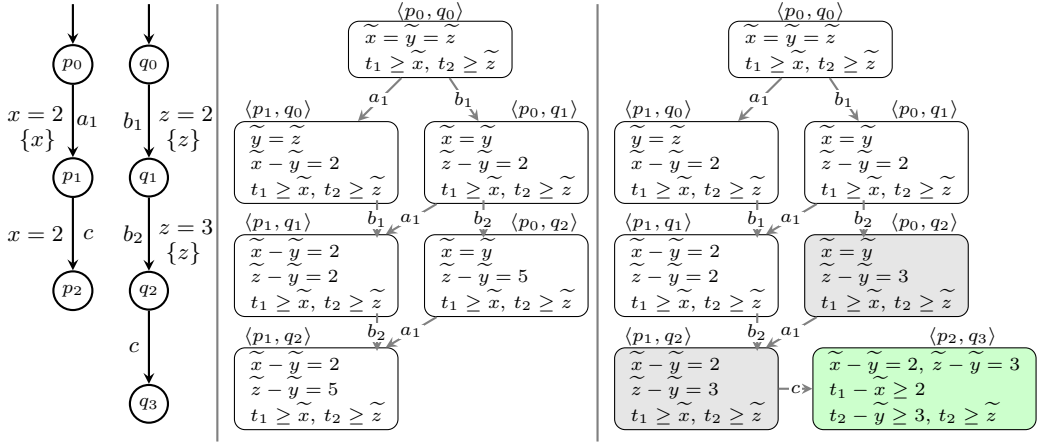


\subsection{Synchronized valuations for subsumption}

A finite abstraction of the differences between reference clocks
constitutes the main challenge in obtaining a finite local zone
graph. We propose a different solution which bypasses the need to
worry about such differences: \emph{restrict to synchronized
  valuations for subsumption}.

Subsumptions over global zones are well
studied~\cite{Alur:TCS:1994,BBLP-STTT05,Herbreteau}.  Taking an
off-the-shelf finite abstraction $\abs$ and a subsumption
$\sqsubseteq^\abs$ between global zones, we want to do the following:
given two local zones $\lZ_1$ and $\lZ_2$ we perform a subsumption
test $\gval(\sync(\lZ_1)) \sqsubseteq^\abs \gval(\sync(\lZ_2))$.  By
finiteness of the abstraction, we will get only finitely many local
zones $\lZ$ with incomparable $\gval(\sync(\lZ))$. In order to do
this, we need the crucial fact that $\gval(\sync(\lZ))$ is a zone
(shown in Lemma~\ref{lemma:sync_global_local_yields_zones}).
Given two configurations $s:= (q, \lZ)$ and $s':= (q', \lZ')$ of the
local zone graph, we write $s \syncincl s'$ if $q = q'$ and
$\gval(\sync(\lZ)) \sqsubseteq^\abs \gval(\sync(\lZ'))$.

\begin{definition}[Local sync graph] \label{def:sync-graph} Let $\abs$
  be a finite abstraction over global zones. A \emph{local sync graph}
  $G$ of a network of timed automata $\Nn$ based on $\abs$, is a tree
  and a subgraph of $\lzg(\Nn)$ satisfying the following conditions:
  \begin{description}
  \item[\textbf{C0}] every node of $G$ is labeled either
    \emph{covered} or \emph{uncovered};
  \item[\textbf{C1}] the initial node of $\lzg(\Nn)$ belongs to $G$
    and is labeled uncovered;
  \item[\textbf{C2}] every node is reachable from the initial node;
  \item [\textbf{C3}] for every uncovered node $s$, all its successor
    transitions $s \lact{a} s'$ occurring in $\lzg(\Nn)$ should be
    present in $G$;
  \item [\textbf{C4}] for every covered node $s \in G$ there is an
    uncovered node $s' \in G$ such that $s \syncincl s'$. A covered
    node has no successors.
  \end{description}
\end{definition}

The above definition essentially translates to this algorithm: explore
the local zone graph say in a BFS fashion, and subsume (cover) using $\syncincl$ (similar
to algorithm for global zone graph as described in
page~\pageref{def:zone-graph-subsumption}).  Local sync graphs are not
unique, since the final graph depends on the order of exploration.
Every local sync graph based on a finite abstraction $\abs$ is finite.
Theorem~\ref{thm:main} below states that local sync graphs are
sound and complete for reachability, and this algorithm is correct.
\begin{theorem}[Soundness and completeness of local sync
  graphs]\label{thm:main}
  A state $q$ is reachable in a network of timed automata $\Nn$ iff a
  node $(q,\lZ)$, with $\lZ$ non-empty, is reachable from the initial
  node in a local sync graph for $\Nn$.
\end{theorem}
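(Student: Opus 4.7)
The plan is to prove the two directions separately, with soundness being immediate and completeness carrying the work. For soundness, any local sync graph $G$ is by definition a subgraph of $\lzg(\Nn)$, so any node $(q, \lZ)$ reachable in $G$ with $\lZ$ non-empty is also reachable in $\lzg(\Nn)$, and Theorem~\ref{thm:offset-zone-graphs} then yields a run of $\Nn$ reaching $q$.

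For completeness, I start from a global run $(q^\init, v^\init) \act{a_1} (q_1, v_1) \act{a_2} \cdots \act{a_n} (q_n, v_n)$ with $q_n = q$ (obtained from the assumption that $q$ is reachable) and prove by induction on $i \in \{0, \ldots, n\}$ the following invariant: there exists an uncovered node $(q_i, \lZ_i) \in G$ with $v_i \in \abs(\gval(\sync(\lZ_i)))$, where $\abs$ is the finite abstraction underlying $\syncincl$. The base case uses condition \textbf{C1}: the initial node $(q^\init, \lZ^\init)$ is uncovered in $G$, and $v^\init = \gval(\lv^\init)$ lies in $\gval(\sync(\lZ^\init)) \subseteq \abs(\gval(\sync(\lZ^\init)))$.

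The inductive step is the heart of the proof. Given the invariant at $i$, I pick a witness $v_i^* \in \gval(\sync(\lZ_i))$ with $v_i \fleq v_i^*$ (which exists because $\abs(\emptyset) = \emptyset$). The time-abstract simulation $\fleq$ matches the delay-action step from $v_i$ to $v_{i+1}$ by a global step $(q_i, v_i^*) \act{a_{i+1}} (q_{i+1}, v_{i+1}^*)$ with $v_{i+1} \fleq v_{i+1}^*$. I then use Lemma~\ref{lem:eq-local-global} to lift this to a local run $(q_i, \lval(v_i^*)) \lact{a_{i+1}} (q_{i+1}, \lval(v_{i+1}^*))$. Since $\lval(v_i^*) \in \sync(\lZ_i) \subseteq \lZ_i$ and $\lZ_i$ is time-elapsed, the pre property in Lemma~\ref{lem:pre-post} provides a step $(q_i, \lZ_i) \lact{a_{i+1}} (q_{i+1}, \lZ_{i+1}^*)$ in $\lzg(\Nn)$ with $\lval(v_{i+1}^*) \in \lZ_{i+1}^*$. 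As $\lval(v_{i+1}^*)$ is synchronized, $v_{i+1}^* \in \gval(\sync(\lZ_{i+1}^*))$. Condition \textbf{C3} forces this successor into $G$; if it happens to be covered, \textbf{C4} provides an uncovered $(q_{i+1}, \lZ_{i+1})$ with $(q_{i+1}, \lZ_{i+1}^*) \syncincl (q_{i+1}, \lZ_{i+1})$, giving $v_{i+1}^* \in \abs(\gval(\sync(\lZ_{i+1})))$. Combining with $v_{i+1} \fleq v_{i+1}^*$ and using transitivity of $\fleq$ (i.e. idempotence of $\abs$), I conclude $v_{i+1} \in \abs(\gval(\sync(\lZ_{i+1})))$, closing the induction. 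At $i = n$ the invariant produces an uncovered $(q, \lZ_n) \in G$ with $v_n \in \abs(\gval(\sync(\lZ_n)))$; since $\abs(\emptyset) = \emptyset$, the local zone $\lZ_n$ is non-empty, as required.

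The main obstacle will be justifying the lifting of the global time-abstract simulation to the local-zone setting. The subtle point is that $\fleq$ is formulated over delay-action steps in global-time semantics, while $\syncincl$ compares local zones through derived global zones $\gval(\sync(\cdot))$, and steps in $\lzg(\Nn)$ proceed by local-time delays and local synchronizations. The bridge that makes the argument go through is Lemma~\ref{lem:eq-local-global}, which converts between global and local runs at the valuation level, together with the pre property in Lemma~\ref{lem:pre-post}, which transports such valuation-level runs to local-zone steps. Morally, Theorem~\ref{thm:local_zones_are_aggregated_zones} is what legitimizes $\gval(\sync(\cdot))$ as the right aggregated object, so that standard zone-level simulation reasoning transfers to local zones.
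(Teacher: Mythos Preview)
Your proposal is correct and follows essentially the same approach as the paper's proof. The only cosmetic difference is that the paper phrases the inductive invariant as ``there exists an uncovered node $(q_i,\lZ_i)$ and a synchronized $\lv_i\in\lZ_i$ with $v_i\fleq\gval(\lv_i)$'' and carries the witness explicitly, whereas you phrase it as $v_i\in\abs(\gval(\sync(\lZ_i)))$ and re-extract the witness at each step; these are equivalent by the definition of $\abs$, and the remaining steps (simulation matching, conversion to a local run, pre-property, and the covered/uncovered case split via \textbf{C3}/\textbf{C4}) are identical.
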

\begin{proof}
  If $(q,\lZ)$ is reachable in a local sync graph then it is trivially
  reachable in the local zone graph and the (backward) implication
  follows from soundness of local zone graphs (Theorem
  \ref{thm:offset-zone-graphs}). For the other direction which proves
  completeness of local sync graphs, let us take a global run:
  $(q_\init,v_\init)\act{\d_1,b_1}(q_1,v_1)\cdots\act{\d_n,b_n}(q_n,v_n)$. Take
  a local sync graph $G$, based on abstraction $\abs$ coming from a
  simulation $\fleq$.  By induction on $i$, for every $(q_i,v_i)$ we
  will find an uncovered node $(q_i,\lZ_i)$ of $G$, and a synchronized
  local valuation $\lv_i\in \lZ_i$ such that $v_i \fleq
  \gval(\lv_i)$. This proves completeness, since every reachable
  $(q_i, v_i)$
  will have a representative node in the local sync graph.
  
  The induction base is immediate, so let us look at the induction
  step. Consider the global step
  $(q_i, v_i) \act{\d_i, b_i} (q_{i+1}, v_{i+1})$. Since
  $v_i \fleq \gval(\lv_i)$, there is a delay $\d'_i$ such that
  $(q_i, \gval(\lv_i)) \act{\d'_i, b_i} (q_{i+1}, v'_{i+1})$ and
  $v_{i+1} \fleq v'_{i+1}$. As the global delay $\d'_i$ can be thought
  of a sequence of local delays, we have local run
  $(q_i, \lv_i) \lact{b_i} (q_{i+1}, \lv'_{i+1})$, where
  $\lv'_{i+1} = \lval(v'_{i+1})$. Note that $\lv'_{i+1}$ is
  synchronized and $v_{i+1} \fleq \gval(\lv'_{i+1})$. From the
  pre-property of local zones (Lemma~\ref{lem:pre-post}) there exists
  a transition $(q_i, \lZ_i) \lact{b_i} (q_{i+1}, \lZ'_{i+1})$ with
  $\lv'_{i+1} \in \lZ'_{i+1}$, in fact,
  $\lv'_{i+1} \in \sync(\lZ'_{i+1})$.  If $(q_{i+1}, \lZ'_{i+1})$ is
  uncovered, take $\lv'_{i+1}$ for $\lv_{i+1}$ and $\lZ'_{i+1}$ for
  $\lZ_{i+1}$ (needed by the induction step). Otherwise, from
  condition C4, there is an uncovered node $(q_{i+1}, \lZ''_{i+1})$
  such that
  $\gval(\sync(\lZ'_{i+1})) \fleq \gval(\sync(\lZ''_{i+1}))$.  This
  gives $\lv''_{i+1} \in \sync(\lZ''_{i+1})$ such that
  $\gval(\lv'_{i+1}) \fleq \gval(\lv''_{i+1})$.  Now take
  $\lv''_{i+1}$ for $\lv_{i+1}$ and $\lZ''_{i+1}$ for $\lZ_{i+1}$.
\end{proof}


\section{Experiments}
\label{sec:experiments}

We have implemented the construction of local sync graphs in our
prototype TChecker~\cite{TChecker} and compared it with two
implementations of the usual global zone graph method: TChecker and
UPPAAL~\cite{Larsen:1997:UPPAAL,BehrmannDLHPYH06}, the state-of-the-art
verification tool for timed automata. 
The three implementations use a breadth-first
search  with subsumption, and the $\syncincl$ subsumption in the case
of local sync graph.
Table~\ref{table:experimental-results} presents results of our
experiments on standard models from the literature (except
``Parallel'' that is a model we have introduced). 

Local sync graphs yield no gain on 3 standard examples (which are not
given in Table~\ref{table:experimental-results}): ``CSMA/CD'',
``FDDI'' and ``Fischer''. In these models, the three algorithms visit
and store the 
same number of nodes.
The reason is that for ``CSMA/CD'' and ``FDDI'', replacing
each local zone $\lZ$ in the local zone graph by its set of
synchronized valuations $sync(\lZ)$ yields exactly the zone graph. In
the third model, ``Fischer'', every control state appears at most once
in the global zone graph. So there is no hope to achieve any gain with
our technique. This is due to the fact that doing $ab$ or $ba$ results
in two different control states in the automaton. So ``Fischer'' is
out of the scope of our technique.

In contrast, we observe significant improvements on other standard
models (Table~\ref{table:experimental-results}).
Observe that due to
subsumption, the order in which nodes are visited impacts the total
number of visited nodes. UPPAAL and our prototype TChecker (Global ZG
column) may not visit the same number of nodes despite the fact that
they implement the same algorithm.  In our prototype we use the same
order of exploration for Global ZG and Local ZG. ``CorSSO'' and
``Critical region'' are standard examples from the
literature. ``Dining Philosophers'' is a modification of the classical
problem where a philosopher releases her left fork if she cannot take
her right fork within a fixed amount of
time~\cite{DBLP:journals/tcs/LugiezNZ05}.  ``Parallel'' is a model we
have introduced, where concurrent processes compete to access a
resource in mutual exclusion. We observe an order of magnitude gains
for most of these four models.
The reason is that in most states
when two processes can perform actions $a$ and $b$, doing $ab$ or $ba$
leads to the same control-state of the automaton. Hence, a difference
between $ab$ and $ba$ (if any) is encoded in distinct zones $Z_{ab}$
and $Z_{ba}$ obtained along these two paths in the global zone
graph. In contrast, the two paths result in the same zone $\lZ$
(containing both $Z_{ab}$ and $Z_{ba}$) in the local sync graph. In
consequence, our approach that combines the local zone graph and
abstraction using synchronized zones is very efficient in this
situation.

\begin{table}[t]
  \centering
  \begin{tabular}{|l|r|r|r|r|r|r|r|r|r|}
    \hline
    Models
    & \multicolumn{3}{|c|}{UPPAAL}
    & \multicolumn{3}{|c|}{Global ZG}
    & \multicolumn{3}{|c|}{Local ZG}\\
    \cline{2-10}
    (\# processes)
    & visited & stored & sec.
    & visited & stored & sec.
    & visited & stored & sec.\\
    \hline
    CorSSO 3
    & 64378 & 61948 & 1.48
    & 64378 & 61948 & 1.41
    & 1962 & 1962 & 0.05\\
    CorSSO 4
    & \multicolumn{3}{|c|}{timeout}
    & \multicolumn{3}{|c|}{timeout}
    & 23784 & 23784 & 0.69\\
    CorSSO 5
    & \multicolumn{3}{|c|}{timeout}
    & \multicolumn{3}{|c|}{timeout}
    & 281982 & 281982 & 16.71\\
    \hline
    Critical reg. 4
    & 78049 & 53697 & 1.45
    & 75804 & 53697 & 2.27
    & 44490 & 28400 & 2.40\\
    Critical reg. 5
    & \multicolumn{3}{|c|}{timeout}
    & \multicolumn{3}{|c|}{timeout}
    & 709908 & 389614 & 75.55\\
    \hline
    Dining Phi. 7
    & 38179 & 38179 & 34.61
    & 38179 & 38179 & 7.28
    & 2627 & 2627 & 0.32\\
    Dining Phi. 8
    & \multicolumn{3}{|c|}{timeout}
    & \multicolumn{3}{|c|}{timeout}
    & 8090 & 8090 & 1.65\\
    Dining Phi. 9
    & \multicolumn{3}{|c|}{timeout}
    & \multicolumn{3}{|c|}{timeout}
    & 24914 & 24914 & 7.10\\
    Dining Phi. 10
    & \multicolumn{3}{|c|}{timeout}
    & \multicolumn{3}{|c|}{timeout}
    & 76725 & 76725 & 30.20\\
    \hline
    Parallel 6
    & 11743 & 11743 & 4.82
    & 11743 & 11743 & 1.09
    & 256 & 256 & 0.02\\
    Parallel 7
    & \multicolumn{3}{|c|}{timeout}
    & \multicolumn{3}{|c|}{timeout}
    & 576 & 576 & 0.04\\
    Parallel 8
    & \multicolumn{3}{|c|}{timeout}
    & \multicolumn{3}{|c|}{timeout}
    & 1280 & 1280 & 0.11\\
    \hline
  \end{tabular}
  \caption{Experimental results obtained by running UPPAAL and our
    prototype TChecker (Global ZG and Local ZG) on a MacBook Pro 2013
    with 4 2.4GHz Intel Core i5 and 16 GB of memory. The timeout is 90
    seconds. For each model we report the number of concurrent
    processes.}
  \label{table:experimental-results}
\end{table}


\section{Conclusions}
\label{sec:conclusions}
We have revisited local-time semantics of timed automata and local
zone graphs. We have discovered a very useful fact that local zones
calculate aggregated zones: global zones that are unions of all the
zones obtained by equivalent executions~\cite{Salah2006OnII}.  We have
used this fact as a theoretical foundation for an algorithm
constructing local zone graphs and using subsumption on aggregated
zones at the same time.

We have shown that, unfortunately, subsumption operations on local zones
proposed in the literature do not work. We have proposed a new
subsumption for local zone graphs based on standard abstractions for
timed automata, applied on synchronized zones. The restriction to
synchronized zones is crucial as standard abstractions 
cannot handle multiple reference clocks. A direction for future work is
to find  abstractions for local zones.

Our algorithm is the first efficient implementation of local time zone
graphs and aggregated zones. Experimental results show an order of
magnitude gain with respect to state-of-the-art algorithms on several
standard examples.

As future work, we plan to develop partial-order techniques taking
advantage of the high level of commutativity in local zone graphs.
Existing methods are not directly
applicable in the timed setting. 
In particular, contrary to expectations, actions with disjoint domains
may not be independent (in a partial order sense) in a local zone
graph~\cite{Minea}.
 Thus, it will be very
interesting to understand the structure of local zone graphs better. A
recent partial-order method proposed for timed-arc Petri
nets~\cite{DBLP:conf/cav/BonnelandJLMS18} gives a hope that such
obstacles can be overcome.  For timed networks with cycles, the
interplay of partial-order and subsumption adds another level of
difficulty.
  

\bibliography{por}

\end{document}